\newif\ifFull
\newif\ifMapReduce
\newif\ifAllPairs
\newif \ifStrata
\newtheorem{theorem}{Theorem}%[section] uncomment to number theorems by section
\newtheorem{lemma}[theorem]{Lemma}
\newtheorem{corollary}[theorem]{Corollary}
\newtheorem{remark}[theorem]{Remark}
\newenvironment{proof}{\noindent{\bf Proof:}}{\hspace*{\fill}\rule{6pt}{6pt}\bigskip}
\newtheorem{remark}[theorem]{Remark}
\def\@begintheorem#1#2{\sl \trivlist \item[\hskip \labelsep{\bf #1\ #2:}]}
\def\@opargbegintheorem#1#2#3{\sl \trivlist
      \item[\hskip \labelsep{\bf #1\ #2\ #3:}]}
\title{
Set-Difference Range Queries
}
\author{
David Eppstein
\thanks{
Dept. of Comp. Sci.,
U.~of CA, Irvine, 
\texttt{eppstein(at)uci.edu}
}
\and
Michael T. Goodrich 
\thanks{
Dept. of Comp. Sci.,
U.~of CA, Irvine, 
\texttt{goodrich(at)uci.edu}
}
\and
Joseph A. Simons
\thanks{
Dept. of Comp. Sci.,
U.~of CA, Irvine, 
\texttt{jsimons(at)uci.edu}
}
}
\date{}
\begin{document}
\thispagestyle{empty}
\maketitle 

\begin{abstract}
We introduce the problem of performing 
set-difference range queries, where answers to queries are
set-theoretic symmetric differences between 
sets of items in two geometric ranges. 
We describe a general framework for answering
such queries based on a novel use of 
data-streaming sketches we call 
\emph{signed symmetric-difference} sketches.
We show that such sketches can be realized using
invertible Bloom filters (IBFs), which
can be composed, differenced, 
and searched so as to solve set-difference range queries in a wide
range of scenarios.
% including orthogonal range queries, simplex range
% queries, sub-image range queries, and interval stabbing queries.
% Our approach allows for query times whose time performance depends on
% the size of the set difference in the range, rather than on the size of the union of the two queried ranges.
%\textbf{Set-difference range queries}:
%preprocess a collection of sets of points in $\R^d$, for fixed $d\ge1$,
%so that given two geometric ranges, $R_1$ and $R_2$,
%and two sets, $S_1$ and $S_2$, one can quickly
%report (or count) the points in the
%set-theoretic symmetric difference 
%$(R_1\cap S_1)\ \Delta\ (R_2\cap S_2)$.
%\ifAllPairs
%\end{itemize}
%We show how to solve each of the above problems efficiently
%using our randomized data structures, which include invertible Bloom filters for
%listing the elements of a set difference and strata estimators for approximating
%the difference size. Our constructions involve minor modifications to the
%previous data structures in order to reduce their failure rate to an arbitrarily
%small parameter, as is needed in these algorithmic applications.
%\fi{}
\end{abstract}

\ifFull{
\setcounter{page}{0}
\clearpage
}\fi{}
%
%
% paper body
\section{Introduction}

Efficiently
identifying or quantifying the differences between two sets is a problem that arises frequently in applications, for example, when clients synchronize the
calendars
on their smart phones with their calendars at work, when
databases reconcile their contents after a network partition, or
when a backup service queries a file system for any changes that have
occurred since the last backup.
Such queries can be global, for instance,
in a request for the differences across all data values for a pair of
databases, or they can be localized, requesting differences for
a specific range of values of particular interest.
For example,
clients might only need to synchronize their calendars for a certain
range of dates or a pair of databases may
need only to reconcile their contents for a certain range of
transactions.
We formalize this task by a novel type of range searching problem, which we call \emph{set-difference range queries}.

We assume a collection $\cal X$ of sets $\{X_1,X_2,\ldots,X_N\}$, containing \emph{data items} that are each associated with a geometric point and with a member of
universe, $\cal U$, of size $U=|{\cal U}|$. A set-difference range query is specified by the indices of a pair of data sets,
$X_i$ and $X_j$, and by a pair of \emph{ranges}, $R_1$ and $R_2$, which are each a
constant-size description of a set of points such as a
hyper-rectangle, simplex, or half-space.
The answer to this set-difference range query consists of
the elements of $X_i$ and $X_j$ whose associated points belong to the ranges $R_1$ and $R_2$ respectively, and whose associated elements in $U$ are
contained in one of the two data sets but not both. 
Thus, we preprocess 
$\cal X$
%a collection of sets of points in $\R^d$, for fixed $d\ge1$,
so that given ranges, $R_1$ and $R_2$,
and two sets, $X_1$ and $X_2$, we can quickly
report (or count) the universe elements in the
set-theoretic symmetric difference 
$(R_1\cap X_1)\ \bigtriangleup \ (R_2\cap X_2)$.
%That is, the answer is the set $(X_i \cap R_1)
%\bigtriangleup (X_j \cap R_2)$,
%where $\bigtriangleup$ denotes symmetric difference.
The performance goal in answering such
queries is to design data structures having low space and
preprocessing requirements that
support fast set-difference range queries whose time depends
primarily on the size of the difference, not the number of items in
the range 
%Including figure in the body.
%\ifFull
    (see Figure~\ref{fig:differences}).
%\else
%    (see Figure~\ref{fig:differences} in Appendix~\ref{app:figs}).
%\fi{}
%
%\ifFull
Examples of such scenarios include the following.
\begin{figure}[t!]
\centering
\includegraphics[width=3.5in, trim = 0.1in 0.75in 0.25in 0.75in, clip]{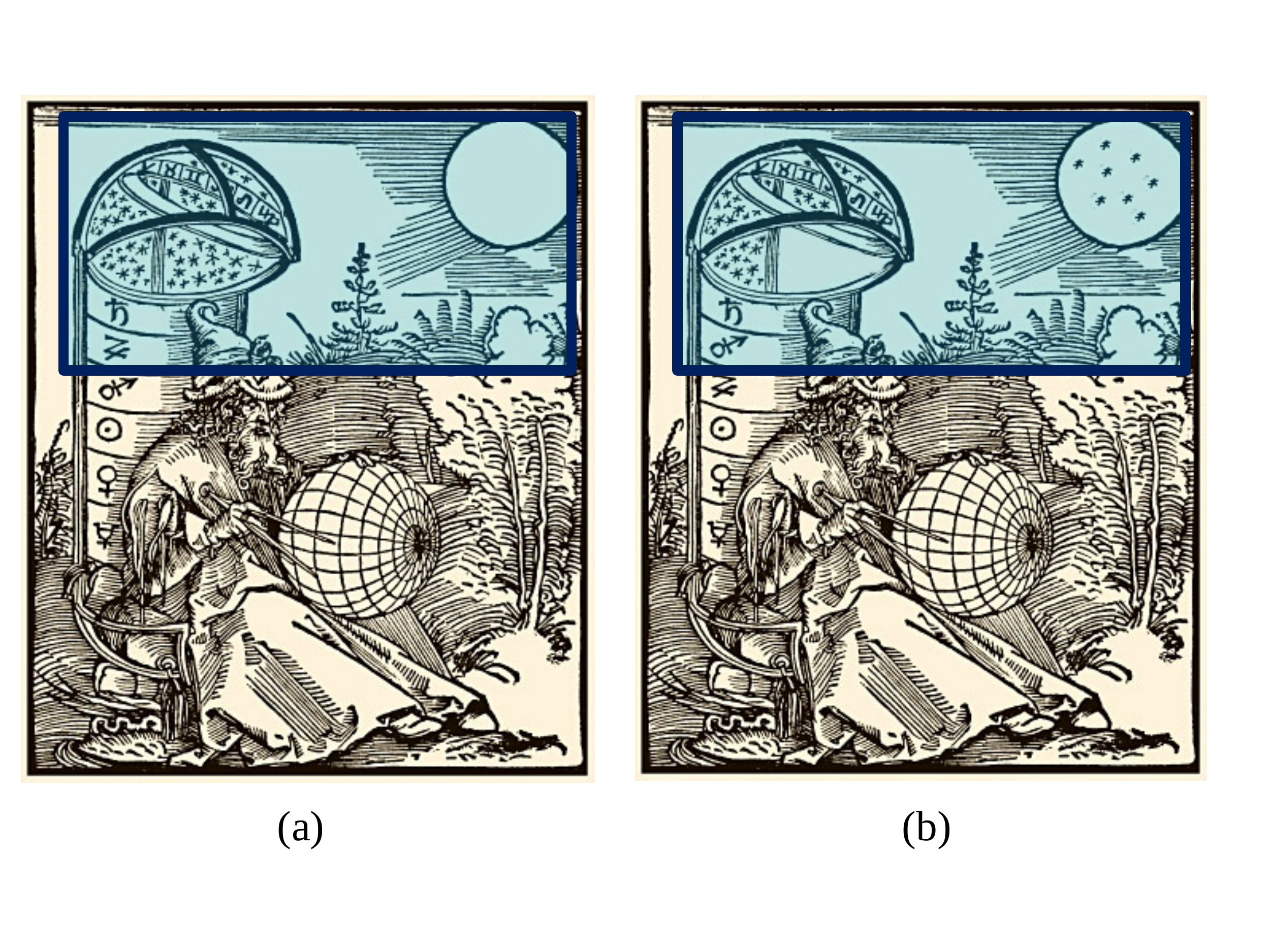}
\vspace{-20pt}
\caption{\label{fig:differences} Illustrating the set-difference
range query problem. The images in (a) and (b) have four major
differences, three of which are inside the common query range.
The image (a) is a public-domain engraving of an astronomer by Albrecht
D{\"u}rer, from the title page of 
\textit{Messahalah, De scientia motus orbis} (1504).}
\end{figure}
%\fi{}
%

\begin{table*}[hbt]
\vspace*{-12pt}
\centering
\begin{tabular}{|l l|l|l|} 
\hline
%		\multicolumn{5}{|c|}{Range Query Summary Table}   \\ \hline
		Query Type   &                   &  Query Time         & Space   \\ \hline
Orthogonal:	&	Standard~\ifFull\cite{Lueker1978,Bentley1980}\else\cite{Bentley1980}\fi{}  & $O(\log^{d-1} n)$ & $O(n \log^{d-1} n)$     \\	
     %   &  SD naive                      & $O(n + \log^{d-1} n)$      & $O(n \log^{d-1} n)$     \\
		&  SD fixed $m$                  & $O(m \cdot \log ^d n)$     &  $O(m \cdot n \log^{d-1} n)$     \\
		&  SD variable $m$               & $O(m \cdot \log ^d n)$     &  $O(n \log ^d n)$ 		\\ 
		&  SD size est.                  & $O(\log ^{d+1} n \log U)$         & $O(n \log^{d} n \log U)$         \\ \hline
Simplex:& Standard~\cite{Matousek1992}   & $O(n^{1-1/d}(\log n)^{O(1)})$   & $O(n)$                 \\
	%	& SD naive                       &  $O(n+ n^{1-1/d}(\log n)^{O(1)})$    & $O(n)$                 \\
        & SD fixed $m$                   & $O(m \cdot n^{1-1/d}(\log n)^{O(1)})$   & $O(m\cdot n)$             \\
		& SD variable $m$                & $O(m \cdot n^{1-1/d}(\log n)^{O(1)})$   & $O(n \log \log n)$        \\ 		
		& SD size est.                   & $O(n^{1-1/d}(\log n)^{O(1)} \log U)$ & $O(n \log n \log U )$          \\ \hline
Stabbing:& Standard~\cite{Bentley1977}   &  $O(\log n)$         &     $O(n \log n)$            \\
	%	& SD naive                       &  $O(n + \log n)$         &     $O(n \log n)$            \\
        & SD fixed $m$                   &  $O(m \cdot \log n)$ &     $O(m \cdot n \log n)$    \\
		& SD variable $m$                &  $O(m \cdot \log n)$ &     $O( n \log n)$           \\ 		
		& SD size est.                   &  $O(\log^2 n \log U)$       &     $O(n \log n \log U)$          \\ \hline     
%\ifFull
Partial Sum:& Standard\footnotemark &       $O(1)$              & $O(n)$           \\
	%	& SD naive                       &       $O(n)$              & $O(n)$          \\
        & SD fixed $m$                   &       $O(m)$              & $O(m\cdot n)$    \\
		& SD variable $m$                &       $O(m)$              & $O(n^2)$         \\ 			
        & SD size est.                   &       $O(\log n \log U)$         & $O(n \log n \log U)$    \\ \hline
%\fi
\end{tabular}

\medskip
\caption{
	%Results when applying 
	%Theorems~\ref{thm:fixed-cardinality-reporting},
	%\ref{thm:variable-cardinality-reporting}, 
	%and~\ref{thm:size-estimation} to several range query data structures.
	The results labeled ``Standard'' are previously known results 
	for each data structure. 
	Results labeled 
	``SD'' indicate bounds for set-difference range queries. 
    %In the results labeled ``SD'' (set-difference), 
    %%% Naive is O(n) no matter what
    %we compare the naive approach with our 
%results. %based on signed-symmetric difference structures.
Here $d$ is the dimension of the query, $m$ is the output size, and we assume
the approximation factor
$(1 \pm \epsilon)$ and failure probability $\delta$ are fixed.  
}
	\ifFull \label{tab:summaryFull} \else \label{tab:summary} \fi 
\end{table*}

\begin{itemize}
\item
Each set contains readings from a group of sensors in a given
time quantum
(e.g., see~\cite{Basseville1988309}).
Researchers may be interested in determining which
sensor values have changed between two time quanta in a given
region.
\item
Each set is a catalog of astronomical objects
in a digital sky survey. 
Astronomers are often interested in objects that appear or disappear
in a given rectangular region between pairs 
of nightly observations.
(E.g., see~\cite{Sloan}.)
\item
Each set is an image taken at a certain time and place.
Various applications may be interested in pinpointing changes that
occur between pairs of images 
(e.g., see~\cite{raar-icda-05}), 
which is something that might be done, for instance,
via two-dimensional binary search and repeated set-difference range
queries.
\end{itemize}

%hack to make footnote in table work
\makeatletter
\global\let\Hy@saved@currentHref\@currentHref
\hyper@makecurrent{Hfootnote}%
\global\let\Hy@footnote@currentHref\@currentHref
\global\let\@currentHref\Hy@saved@currentHref
\makeatother
\footnotetext{A standard solution to the partial sum problem is illustrated in
Figure \ref{figPrefixSums}}

\subsection{Related Work}
We are not aware of prior work on set-difference range
queries. 
However,
Suri \emph{et al.}~\cite{Suri2006} consider approximate range counting in 
data streams.
Shi and JaJa~\cite{sj-anfat-05} present a data structure for 
range queries indexed at a specific time
for data that is changing over time, achieving
polylogarithmic query times.
If used for set-difference range queries, however, their scheme would
not produce answers in time proportional
to the output size.
\ifFull
Such queries are also related to 
kinetic range-query data structures, 
such as those given by Agarwal \textit{et al.}~\cite{aae-imp-03}
or \v{S}altenis \textit{et al.}~\cite{sjll-ipcm-00},
which can answer range queries
for a given moment in time 
for points moving along known trajectories.
For a survey of various schemes for indexing and search temporal
data, please see Mokbel \textit{et al.}~\cite{Mokbel03spatio-temporalaccess}.
\fi%
For a survey of general schemes for range searching data
structures, see Agarwal~\cite{a-rs-97}.

\subsection{Our Results}
We provide general methods for supporting a wide class of set-difference range queries
by combining signed-symmetric difference sketches 
with any canonical group or
semigroup range searching structure.
Our methods solve range difference queries where the two sets being
compared may be drawn from the same or different data sets, and may be defined
by the same or different ranges. In our data structures, sets are combined in
the \emph{multiset model}: two data items may be associated with the same
element of $U$, and if they belong to the same query range they are considered
to have multiplicity two, while if they belong to the two different query ranges
defining the set difference problem then their cardinalities cancel. 
The result
of a set difference query is the set of all elements whose total cardinality
defined in this way is nonzero. 

Our data structures are probabilistic and return
the correct results with high probability. Our running times depend on the size
of the output, but only weakly depend on the size of the original sets. 
In particular, we derive the results
shown in Table~\ifFull \ref{tab:summaryFull} \else \ref{tab:summary}\fi{}
for the following range-query problems 
\ifFull
(see Section~\ref{sec:sdDetails} for details):
\else
(see Appendix~\ref{sec:sdDetails} for details):
\fi

\begin{itemize}
\itemsep0pt
\item Orthogonal:
Preprocess a set of points in $R^d$
 such that given a query range defined by an axis-parallel hyper rectangle, we
 can efficiently report or estimate the size of the set of points contained in the hyper-rectangle
\item
Simplex:
Preprocess a set of points in $R^d$ such that given a query range defined 
by a simplex in $R^d$, we can efficiently report or estimate the size of set of points contained in the simplex
\item
Stabbing:
Preprocess a set of intervals such that given a query point $x$, we can efficiently
report or estimate the size of the subset which intersects $x$.
\item
Partial Sum:
Preprocess  a grid of total size $O(n)$ (e.g. an image with $n$ pixels, or a $d$
dimensional array with $O(n)$ entries for any constant $d$) such that given a
query range defined by a hyper-rectangle on the grid, we can efficiently report
or estimate the sum of the values contained in that hyper-rectangle.
\end{itemize}

\ifFull
In the remainder of the paper, we show
how to perform set-difference range reporting and counting, based on
the framework of using signed-symmetric difference reporting (SDR)
and signed-symmetric difference counting (SDC) data
structures.
Finally, we discuss the underlying methods for implementing
SDR and SDC sketches in more detail in Section~\ref{app:IBF} and Section~\ref{sec:fme}, respectively, based on the use of invertible Bloom filters and frequency
moments.
\fi

%old location of prefix sums figure

%\subsection{The Abstract Range Searching Framework} \label{secFramework}
\section{The Abstract Range Searching Framework} \ifFull \label{secFrameworkFull} \else \label{secFramework} \fi{} 
In order to state our results in their full generality it is necessary to
provide some general definitions from the theory of range searching (e.g.,
see~\cite{a-rs-97}).

A \emph{range space} is a pair $(X,R)$ where $X$ is a universe of objects that
may appear as data in a range searching problem (such as the set of all points
in the Euclidean plane) and $R$ is a family of subsets of $X$ that may be used
as queries (such as the family of all disks in the Euclidean plane).  A
\emph{range searching problem} is defined by a range space together with an
\emph{aggregation function} $f$ that maps subsets of $X$ to the values that
should be returned by a query. For instance, in a \emph{range reporting
problem}, the aggregation function is the identity; in a \emph{range counting
problem}, the aggregation function maps a subset of $X$ to its cardinality. A
data structure for the range searching problem must maintain a finite set
$Y\subset X$ of objects, and must answer queries that take a range $r\in R$ as
an argument and return the value $f(r\cap Y)$. The goal of much research in
range searching is to design data structures that are efficient in terms of
their space usage, preprocessing time, update time (if changes to $Y$ such as
insertions and deletions are allowed), and query time. In particular, it is
generally straightforward to answer a query in time $O(|Y|)$, by testing each
member of $Y$ for membership in the query range $r$; the goal is to answer
queries in an amount of time that may depend on the output size but that does
not depend so strongly on the size of the data set.

A range searching problem is said to be \emph{decomposable} if there exists a
commutative and associative binary operation $\oplus$ such that, for any two
disjoint subsets $A$ and $B$ of $X$, $f(A\cup B)=f(A)\oplus f(B)$. In this case,
the operation $\oplus$ defines a \emph{semigroup}.

Many range searching data structures have the following form, which 
we call a
\emph{canonical semigroup range searching data structure}:  %there's already a name for this
the data structure
stores the values of the aggregation function on some family $F$ of sets of data
values, called \emph{canonical sets}, with the property that every data set
$r\cap Y$ that could arise in a query can be represented as the disjoint union
of a small number of canonical sets $r_1$, $r_2$, $\dots$, $r_K$. To answer a
query, a data structure of this type performs this  decomposition of the query
range into canonical sets, looks up the values of the aggregation function on
these sets, and combines them by the $\oplus$ operation to yield the query
result. Note that this is sometimes called a \emph{decomposition scheme}.
 For instance, the order-statistic tree is a binary search tree over an
ordered universe in which each node stores the number of its descendants
(including itself) in the tree, and it can be used to quickly answer queries
that ask for the number of data values within any interval of the ordered
universe. This data structure can be seen as a canonical semigroup range
searching data structure in which the aggregation function is the cardinality,
the combination operation $\oplus$ is integer addition, and the canonical sets
are the sets of elements descending from nodes of the binary search tree. Every
intersection of the data with a query interval can be decomposed into $O(\log
n)$ canonical sets, and so interval range counting queries can be answered with
this data structure in logarithmic time.

\ifFull
In general, there is a tradeoff between $|F|$,
the number of canonical subsets stored by the data
structure, and the number of canonical subsets required to answer a
query.  To illustrate, we consider a range searching problem in one dimension. 
If we have $n$ canonical subsets, then in order to cover all
possible queries we need to have each subset be a singleton set, and to answer a
query we may need to use $O(n)$ subsets. If we use $O(n \log n)$ canonical subsets,
then we can answer a query using only $O(\log n)$ of them as described above.
If we use $O(n^2)$
canonical subsets, we only need to use $O(1)$ canonical subsets to answer a query,
because there are only $O(n^2)$ unique answers to a range query in one
dimension, and by using $O(n^2)$ canonical subsets we can precompute all of
them.
\fi{}
%TODO: ok to remove these paragraphs to make space?
When the combination operation $\oplus$ has additional properties, they may
sometimes be used to obtain greater efficiency. In particular, if $\oplus$ has
the structure of a group, then we may form a \emph{canonical group range
searching data structure}. Again, such a data structure stores the values of the
aggregation function on a family of sets of data values, but in this case it
represents the query value as an expression $(\pm f(r_1))\oplus (\pm
f(r_2))\oplus\cdots$, where the canonical sets $r_i$ and their signs are chosen
with the property that each element of the query range $r$ belongs to exactly
one more positive set than negative set. Again, in the interval range searching
problem, one may store with each element its \emph{rank}, the number of elements
in the range from $-\infty$ to that element, and answer a range counting query
by subtracting the rank of the right endpoint from the rank of the left
endpoint. In this example, the ranks are not easy to maintain if elements are
inserted and deleted, but they allow interval range queries to be answered by
combining only two canonical sets instead of logarithmically many.

\subsection{Signed Symmetric-Difference Sketches}
\label{sec:sd-sketches}
Suppose that we want to represent an input set $S$ in space sub-linear in the size of $S$
such that we can compute some function $f(S)$ on our compressed representation.
This problem often comes up in the streaming literature, and common solutions include 
dimension reduction (e.g. by a Johnson-Lindenstrauss transform~\cite{kn-sjlt-12}) and
computing a \emph{sketch} of $S$ (e.g. Count-Min Sketch~\cite{CM-Sketch}). 

A \emph{sketch} $\sigma_S$ of a set $S$ is a randomized compressed representation of $S$ 
such that we can approximately and probabilistically compute $f(S)$ by evaluating an
appropriate function $f'(\sigma_S)$ on the sketch $\sigma_S$.
%Ideally, the size of the sketch should only be polylogarithmic in the size of the original set.
This construct also comes up when
handling massive data data sets, and in this context the compressed representation is
sometimes called a synopsis~\cite{gm-sdsmds-99}.

A sketch algorithm $\sigma$ is called \emph{linear} if it 
has a group structure. That is, there exist two operators $\oplus$ and $\ominus$ on
sketches $\sigma$ such that given two multi-sets $S$ and $T$,
\[
\sigma_{S \uplus T} = \sigma_S \oplus \sigma_T \mbox{ and } \sigma_{S \setminus T} = \sigma_S \ominus
\sigma_T
\]
where $\uplus$ and $\setminus$ are the multi-set addition and subtraction operators
respectively.

For our results, we define two different types of linear sketches. A \emph{Signed
Symmetric-Difference Reporting} (SDR) sketch is a linear sketch that supports a 
function \texttt{report}: given a pair of sketches $\sigma_S$ and $\sigma_T$ 
for two sets $S$ and $T$ respectively, probabilistically compute 
$S \setminus T$ and $T \setminus S$ using only information stored in the sketches
$\sigma_S$ and $\sigma_T$ in 
$O(1 + m)$ time, where $m$ is the cardinality of the output.
A \emph{Signed Symmetric-Difference Cardinality} (SDC) sketch is a linear sketch that
supports a function \texttt{count}: given a pair of sketches $\sigma_S$ and $\sigma_T$ for
two sets $S$ and $T$ respectively, probabilistically approximate $|S \bigtriangleup T|$
using only information stored in the sketches $\sigma_S$ and $\sigma_T$ in time linear in
the size of the sketches.

\section{Main  Results} %TODO better name 
The main idea of our results is to represent each canonical set by 
an SDR or an SDC sketch. 
We implement our signed symmetric-difference counting sketches using a linear sketch based
on the frequency moment estimation techniques of Thorup and Zhang~\cite{tz-tb4uh-04}.
We implement our signed symmetric-difference reporting sketches via an
\emph{invertible Bloom filter} (IBF), a data structure introduced for straggler
detection in data streams~\cite{eg-sirtd-11}. 
%and also used for
%TODO citations
%distributed version control~\cite{eguv-esswp-11}.
% I don't see the point of citing this one -- as far as I can tell it's dead. DE
%, and efficient computation of
%all-pairs Hamming distances~\cite{Eppstein:2011}. 
%
IBFs can be added and
subtracted, giving them a group structure and allowing an IBF for a query range
to be constructed from the IBFs for its constituent canonical sets. The
difference of the IBFs for two query ranges is itself an IBF that allows the difference
elements to be reported when the difference is small. To handle set differences of varying
sizes we use a hierarchy of IBFs of exponentially growing sizes, together with some
special handling for the case that the final set difference size is larger than the size
of some individual canonical set. 

Further details of the SDR and SDC sketches are given in
later sections.
In this section, we assume the existence of SDR and SDC
sketches as defined above in order to prove the following three theorems which are the
crux of our results listed in 
\ifFull
Table~\ref{tab:summaryFull}.
\else
Table~\ref{tab:summary}
%omit these for now
\fi

\begin{theorem}
\ifFull \label{thm:fixed-cardinality-reportingFull} \else \label{thm:fixed-cardinality-reporting} \fi{} 
Suppose that a fixed limit $m$ on the cardinality of the returned set
differences is known in advance of constructing the data structure, and our
queries must either report the difference if it has cardinality at most $m$, or otherwise report that it is too large.
In this case, we can answer set-difference range queries with probability at least $1 - \epsilon$
for any range space that can be modeled by a 
canonical group or semigroup range searching data structure.
Our solution stores $O(m)$ words of aggregate information per canonical set,
uses a combination operation $\oplus$ that takes time $O(m)$ per combination,
and allows the result of this combination to be decoded to yield the query
results in $O(m)$ time. If the data structure is updated, the aggregate information associated with each changed canonical set can itself be updated in constant time per change.
\end{theorem}
\begin{proof}
\ifFull
We associate with each canonical set an SDR sketch
%IBF 
for its elements that 
%is capable of performing the \textsf{listItems} operation on sets 
supports the \texttt{report} function for outputs of size up to $m$, and we
let $\oplus$ be defined by the operations of adding % and subtracting IBFs 
SDR sketches
as described in the previous section. The result of applying $\oplus$ to the 
%IBFs
SDR sketches
stored at each of the canonical sets  will be a single 
%IBF 
sketch for each of the queried ranges.
Then, to report the set-difference, we call the 
%\textsf{listItems} operation on this IBF, and
\texttt{report} function which
%report the items listed 
outputs the elements if there are fewer than $m$ of them. The SDR sketch is implemented as
an IBF, and the \texttt{report} function is supported via an IBF subtraction followed by
the \textsf{listItems} operation. See Section~\ref{app:IBF}
%won't be in appendix since we're in the full version, but had to fix labels
%quickly
for details of the implementation. 
If the
\textsf{listItems} operation fails to decode the IBF, or if it lists more than
$m$ items, we report that the cardinality bound is exceeded.
\else
See appendix \ref{app:proofs}.
\fi
\end{proof}
\begin{theorem}
\ifFull \label{thm:variable-cardinality-reportingFull} \else \label{thm:variable-cardinality-reporting} \fi{} 
Suppose that we wish to report set differences that may be large or small,
without the fixed bound $m$, in a time bound that depends on the size of the
difference but that may depend only weakly on the size of the total data set. In
this case, we can answer range difference queries with probability at least $1 - \epsilon$ 
for any range space that can
be modeled by a canonical group or semigroup range searching data structure. Our
solution stores a number of words of aggregate information per canonical set
that is $O(1)$ per element of the set, uses a combination operation $\oplus$
that takes time $O(m)$ (where $m$ is the cardinality of the final set-theoretic
difference) and allows the result of this combination to be decoded to yield the
query results in $O(m)$ time. If the data structure is updated, the aggregate
information associated with each changed canonical set can itself be updated in
logarithmic time per change.  
\end{theorem}
\begin{proof}
\ifFull
For each canonical set $S$ of the range query data structure, we store an SDR sketch
consisting of a hierarchy of $\log_2
|S|$ IBFs, capable of successfully performing the \textsf{listItems} operation
on sets of sizes that are powers of two up to the size of $S$ itself. The total
size of these IBFs adds in a geometric series to be proportional to the size of
$S$. The time to update the data structure then follows since each element of
each canonical set is stored in a logarithmic number of IBFs.

To answer a range query, suppose first that $m$ is known. In this case, we
choose $j=\lceil\log_2 d\rceil$, so that $2^j>m$ but $2^j=O(m)$, and we collect
and combine the IBFs of size $2^j$ exactly as in
Theorem~\ifFull{\ref{thm:fixed-cardinality-reportingFull}}\else{\ref{thm:fixed-cardinality-reporting}}\fi{}. The only possible complication is
that, for some canonical sets, the number $2^j$ may be larger than the number of
elements in the canonical set, so that we have no pre-stored IBF of the correct
size to collect. In this case, though, we can construct an IBF of the correct
size from the list of elements of the canonical set, in time $O(2^j)$.
Finally, if $m$ is unknown, we try the same procedure for successive powers of two until finding a set of powers for which the result succeeds.
\else
See Appendix \ref{app:proofs}.
\fi
\end{proof}
\begin{theorem}
\ifFull \label{thm:size-estimationFull} \else \label{thm:size-estimation} \fi{} 
Suppose that we wish to report the cardinality of the set difference rather than
its elements, and further that we allow this cardinality to be reported
approximately, within a fixed approximation ratio that may be arbitrarily close
to one.  In this case, we can answer range difference queries with 
probability at least $1 - \epsilon$ for any range
space that can be modeled by a canonical group or semigroup range searching data
structure.  Our solution stores a number of words of aggregate information per
canonical set that has size $O(\log n \log U)$, uses a combination operation $\oplus$
that takes time $O(\log n \log U)$ and allows the result of this combination to be
decoded to yield the query results in $O(\log n \log U)$ time. 
%\textbf{are these log n bounds right?}
\end{theorem}
\ifStrata
\begin{proof}
We associate a strata estimator with each canonical set, and we let $\oplus$ be
defined by the operations of adding and subtracting strata estimators as
described in the previous section. The result of applying $\oplus$ to these
strata estimators will be a single strata estimator for the queried range
difference, from which we may use the estimator to approximate the cardinality
of the difference. The space and time bounds are those for the strata estimator.
\end{proof}
\else
\begin{proof}
\ifFull
We associate with each canonical set an SDC sketch for its elements that supports the 
\texttt{count} function with probability $1 - \epsilon$.
and we let $\oplus$ be
defined by the operations of adding the SDC sketches.
The result of applying $\oplus$ to these
sketches will be a single sketch for each of the queried ranges.
Then to approximate the cardinality of the difference, we call the \texttt{count} function.
We implement each SDC sketch as set of $O(\log(1/\epsilon))$ independent linear 
frequency moment estimation sketches. 
The space and time bounds are the same as those for computing these sketches, and further
details of the implementation are given in Section~\ref{sec:fme}.
\else
See Appendix \ref{app:proofs}.
\fi
\end{proof}

%\subsection{Partial Sum Queries}
%In Figure~\ifFull{\ref{figPrefixSumsFull}}\else{\ref{figPrefixSums}}\fi{}, we illustrate a
%standard data structure for
%partial sum queries when subtraction is allowed.
%We give details for other range searching data structures in Appendix~\ref{app:}

%\begin{figure}[hbt!]
%\centering
%\includegraphics[scale=.4]{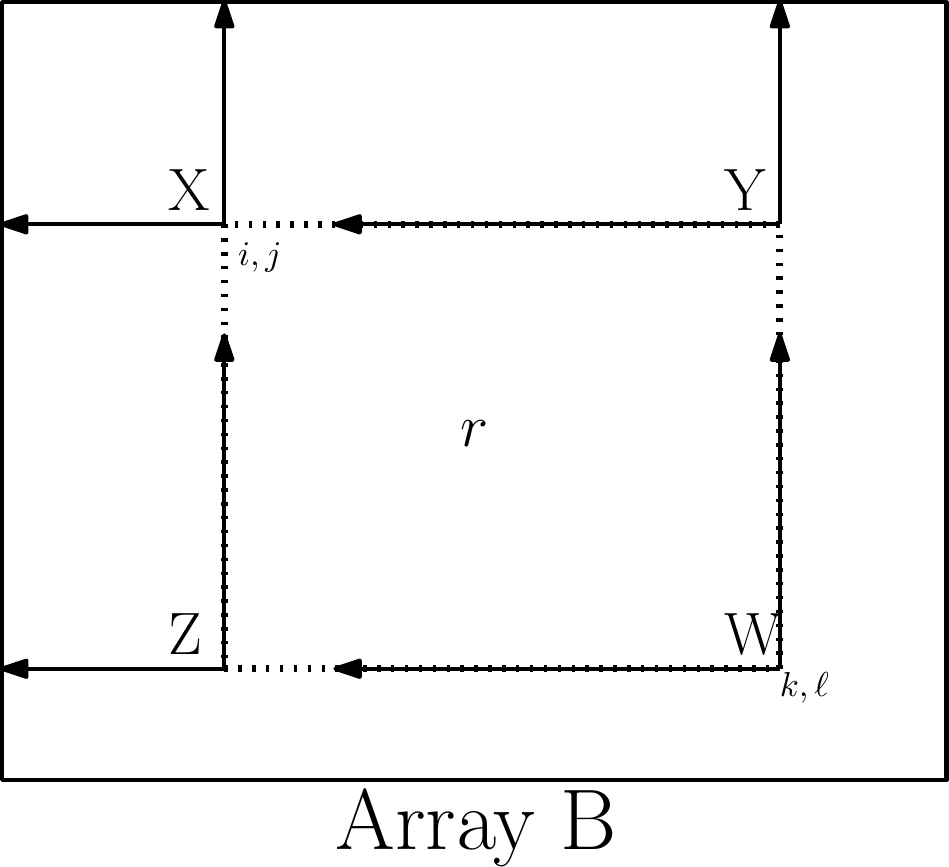}
%\caption{
%\label{figPrefixSums}%
%A data structure for
%partial sum queries when subtraction is allowed.
%The prefix sums 
%required to answer any partial sum query can be precomputed in constant time per
%array element. Given these precomputed canonical subsets, we can answer a
%partial sum query in constant time (assuming $d$ is constant). We illustrate
%this concept in the case where $d = 2$.
%The region $X$ is all the elements of $B$ with
%indices $<i,<j$. $Y$ is all elements of $B$ with indices  $<i, < k$. $Z$ is
%all elements with indices $<k, <j$. $Z$ is all elements with indices
%$\leq k, \leq \ell$. The query rectangle $r$ is all elements between
%indices $i,j$ and $k, \ell$ inclusive. Thus  $\sum w(r) = \sum w(W) - \sum
%w(Z) - \sum w(Y) + \sum w(X)$. }
%\end{figure}

\section{Invertible Bloom Filters}
\ifFull{\label{app:IBF}}\else{\label{sec:IBF}}\fi{}
%Central to our results is 
We implement our SDR sketches using the invertible Bloom filter (IBF)~\cite{eg-sirtd-11},
a variant of the Bloom filter~\cite{bloom1970space} for maintaining sets of
items that extends it in three crucial ways that are central to our application.
First, like the counting Bloom filter~\cite{bmpsv-iccbf-06,fcab-scswa-00}, the
IBF allows both insertions and deletions, and it allows the number of inserted
elements to far exceed the capacity of the data structure as long as most of the
inserted elements are deleted later. Second, unlike the counting Bloom filter,
the IBF allows the 
elements of the set to be listed back out. And third, again unlike the counting
Bloom filter, the IBF allows \emph{false deletions}, deletions of elements that
were never inserted, 
and again it allows the elements involved in false deletion operations to be
listed back out as long as their number is small.  These properties allow us to represent large sets as small
IBFs, and to quickly determine the 
elements in the symmetric difference of the sets, as we now detail.

\ifFull
We assume that the items being considered belong to a countable
universe, $U$, and that given any element, $x$, we can determine a
unique integer identifier for $x$ in $O(1)$ time (e.g., either by $x$'s
binary representation or through a hash function).
Thus, for the remainder of this paper, we assume w.l.o.g.~that each
element $x$ is an integer.

The main components of an IBF are a table, $T$, of $t$ cells, for a given
parameter, $t$; a set of
$k$ random hash functions, $h_1,h_2,\ldots,h_k$, which map any 
element $x$ to $k$ distinct cells in $T$;
and a random hash function, $g$, which maps any element $x$ to a random
value in the range $[1,2^\lambda]$, where $\lambda$ is a specified
number of bits.

Each cell of $T$ contains the following fields:
\begin{itemize}
\item
\textsf{count}: an integer count of the number of items mapped to
this cell
\item
\textsf{idSum}: a sum of all the items mapped to this
cell
\item
\textsf{gSum}: a sum of $g(x)$ values for all items mapped to this
cell.
\end{itemize}
The \textsf{gSum} field is used for checksum purposes.
An IBF supports several simple algorithms 
for item insertion, deletion, and membership queries,
as shown in Figure~\ref{fig:algsFull}.
\else
For the remainder of this paper, we assume without loss of generality that each
element $x$ is an integer.
An IBF supports several simple algorithms 
for item insertion, deletion, and membership queries;
for a review of these basic details of the IBF see 
\ifFull Figure~\ref{fig:algsFull}. \else Appendix \ref{app:IBF-code}. \fi
\fi{}

\ifFull
\begin{figure}[t]
\noindent
{\sf insert}$(x)$:
\begin{algorithmic}[100]
\FOR {each $h_i(x)$ value, for $i=1,\ldots,k$}
\STATE
add $1$ to $T[h_i(x)].\mbox{\textsf{count}}$
\STATE
add $x$ to $T[h_i(x)].\mbox{\textsf{idSum}}$
\STATE
add $g(x)$ to $T[h_i(x)].\mbox{\textsf{gSum}}$
\ENDFOR
\end{algorithmic}

\medskip
\noindent
{\sf delete}$(x)$:
\begin{algorithmic}[100]
\FOR {each $h_i(x)$ value, for $i=1,\ldots,k$}
\STATE
subtract $1$ from $T[h_i(x)].\mbox{\textsf{count}}$
\STATE
subtract $x$ from $T[h_i(x)].\mbox{\textsf{idSum}}$
\STATE
subtract $g(x)$ from $T[h_i(x)].\mbox{\textsf{gSum}}$
\ENDFOR
\end{algorithmic}

\medskip
\noindent
{\sf isMember}$(x)$:
\begin{algorithmic}[100]
\FOR {each $h_i(x)$ value, for $i=1,\ldots,k$}
\IF {$T[h_i(x)].\mbox{\textsf{count}}= 0$ \textbf{and}
      $T[h_i(x)].\mbox{\textsf{idSum}}= 0$  \textbf{and}
      $T[h_i(x)].\mbox{\textsf{gSum}}= 0$}
\STATE 
\textbf{return} \textbf{false}
\ELSIF {$T[h_i(x)].\mbox{\textsf{count}}= 1$ \textbf{and}
      $T[h_i(x)].\mbox{\textsf{idSum}}= x$  \textbf{and}
      $T[h_i(x)].\mbox{\textsf{gSum}}= g(x)$ }
\STATE 
\textbf{return} \textbf{true}
\ENDIF
\ENDFOR
\STATE \textbf{return} ``not determined''
\end{algorithmic}

\medskip
\noindent
{\sf subtract}$(A,B,C)$:
\begin{algorithmic}[100]
\FOR {$i=0$ to $t-1$}
\STATE $T_C[i].\mbox{\textsf{count}} = T_A[i].\mbox{\textsf{count}} - T_B[i].\mbox{\textsf{count}}$
\STATE $T_C[i].\mbox{\textsf{idSum}} = T_A[i].\mbox{\textsf{idSum}} - T_B[i].\mbox{\textsf{idSum}}$
\STATE $T_C[i].\mbox{\textsf{gSum}} = T_A[i].\mbox{\textsf{gSum}} - T_B[i].\mbox{\textsf{gSum}}$
\ENDFOR
\end{algorithmic}

\medskip
\noindent {\sf listItems}$()$:
\begin{algorithmic}[100]
\WHILE {there is an $i\in[1,t]$ such that 
$T[i].\mbox{\textsf{count}} = 1$ \textbf{or} $T[i].\mbox{\textsf{count}} = -1$}
\IF {$T[h_i(x)].\mbox{\textsf{count}}\,=\, 1$ 
     \textbf{and} $T[h_i(x)].\mbox{\textsf{gSum}}\,=\, 
	     g(T[h_i(x)].\mbox{\textsf{idSum}})$ }
\STATE add the item,
$(T[i].\mbox{\textsf{idSum}})$,
to the ``positive'' output list
\STATE call {\sf delete}($T[i].\mbox{\textsf{idSum}}$)
\ELSIF {$T[h_i(x)].\mbox{\textsf{count}}\,=\, -1$ 
       \textbf{and} $-T[h_i(x)].\mbox{\textsf{gSum}}\,=\, 
       g(-T[h_i(x)].\mbox{\textsf{idSum}})$ }
\STATE add the item,
$(-T[i].\mbox{\textsf{idSum}})$,
to the ``negative'' output list
\STATE call {\sf insert}($-T[i].\mbox{\textsf{idSum}}$)
\ENDIF
\ENDWHILE
\end{algorithmic}

\caption{Operations supported by an invertible Bloom filter.}
\ifFull \label{fig:algsFull} \else \label{fig:algs} \fi{} 
\end{figure}
\fi{}

In addition, we can take the difference of one IBF, $A$, with a table $T_A$, and
another one, $B$, with table $T_B$, to produce an IBF, $C$, with table $T_C$,
representing their signed difference, with the items in $A\setminus B$ having
positive signs for their cell fields in $C$ and items in $B\setminus A$ having
negative signs for their cell fields in $C$ (we assume that $C$ is initially
empty).  This simple method is also shown in
\ifFull Figure~\ref{fig:algsFull}. \else Appendix \ref{app:IBF-code}. \fi

Finally, given an IBF, which may have been produced either through
insertions and deletions or through a subtract operation,
we can list out its contents by repeatedly looking for cells with
counts of $+1$ or $-1$ and removing the items for those cells if they
pass a test for consistency.
This method therefore produces a list of items that had positive
signs and a list of items that had negative signs, and 
is shown in \ifFull Figure~\ref{fig:algsFull}. \else Appendix
\ref{app:IBF-code}. \fi
In the case of an IBF, $C$, that is the result of a \textsf{subtract}($A,B,C$)
operation, the positive-signed elements belong to $A\setminus B$ and
the negative-signed elements belong to $B\setminus A$.

\ifStrata
\subsection{The Strata Estimator Algorithm}
The strata estimator~\cite{eguv-esswp-11} combines invertible Bloom filters with
a hierarchical sampling technique 
\ifFull
also used in a different context by Flajolet
and Martin~\cite{Flajolet} and Cormode \textit{et al.}~\cite{cmr-smidd-05}.
\fi{}
The
strata estimator for a set $S$, drawn from a universe of size $U$, consists of
an array of $\log_2 U$ IBF's, numbered $0$, $1$, $2$, $ \dots$, each having the
same number of cells $k$, and a hash function $h$ (independent of the hash
functions used in each IBF).

Specifically, to construct the strata estimator for a set $S$, for each element
$x$ of $S$, we compute $h (x)$, let $\ell$ be the number of trailing zeros in
$h(x)$, and place $x$ into IBF $\ell$, as shown in the pseudocode in
Figure~\ref{fig:strataFull}. Thus, the expected cardinality of each layer $i$ is
$1/2^{i+1}$ times the cardinality of $S$.

To estimate the Hamming distance of two sets, we subtract the corresponding IBFs
in each layer, and find the smallest-numbered layer $\ell$ for which we can
completely decode the difference between the two IBFs at layer $\ell$ and all
higher layers. If all layers are decodable, we have determined the exact
difference; otherwise, we multiply the number of differing elements in layers
$\ell$ and above by $2^\ell $, as in Figure~\ref{fig:strataFull}.

%added minipages to layout in two columns and compress space
\ifFull  %%% should we keep this figure in?
\begin{figure}[ht]% Placing it "h" makes more sense logically but takes more room
%\hspace{.1in}
%\begin{minipage}[t]{.42\textwidth}
\noindent{\sf strata}$(S)$:
  \begin{algorithmic}
    \FOR{each element $x$ in $S$}
    \STATE $y = h(x)$; $\ell = \log_2 (y \,\,\&\,\,\sim\! (y - 1))$
    \STATE add $x$ to the IBF for layer $\ell$
    \ENDFOR
  \end{algorithmic}
%\end{minipage}
%\hspace{.2in}
%\begin{minipage}[t]{.58\textwidth}
\medskip
\noindent{\sf estimate-distance}$(S,T)$:
\begin{algorithmic}
\STATE allocate storage for an IBF $C$
\STATE $d=0$
\FOR{$i=\log_2 U, \log_2 U -1,\dots, 2,1, 0$}
\STATE {\sf subtract}(layer $i$ of $S$, layer $i$ of $T$, $C$)
\IF{$C$ can be completely decoded}
\STATE add the number of decoded elements to $d$
\ELSE
\STATE return $2^{i+1}\times d$
\ENDIF
\ENDFOR 
\STATE return $d$
\end{algorithmic}
%\end{minipage}
\caption{Constructing a strata estimator, and estimating distances from two strata estimators.}
\ifFull \label{fig:strataFull} \else \label{fig:strata} \fi{} 
\end{figure}
\fi{}
\else

\ifFull
\begin{remark}
The \emph{Strata Estimator}~\cite{eguv-esswp-11} combines invertible Bloom filters with
a hierarchical sampling technique.
The strata estimator 
can also be used to implement the SDC sketch. 
However, compared to the frequency moment estimation technique
outlined below, the strata estimator has a slightly worse dependence
on $\epsilon$ and $\delta$, and therefore we do not incorporate it as a
component in our results.
%
%for a set $S$, drawn from a universe of size $U$, consists of
%an array of $\log_2 U$ IBF's, numbered $0$, $1$, $2$, $ \dots$, each having the
%same number of cells $k$, and a hash function $h$ (independent of the hash
%functions used in each IBF).
%
%Specifically, to construct the strata estimator for a set $S$, for each element
%$x$ of $S$, we compute $h (x)$, let $\ell$ be the number of trailing zeros in
%$h(x)$, and place $x$ into IBF $\ell$, as shown in the pseudocode in
%Figure~\ref{fig:strataFull}. Thus, the expected cardinality of each layer $i$ is
%$1/2^{i+1}$ times the cardinality of $S$.
%
%To estimate the Hamming distance of two sets, we subtract the corresponding IBFs
%in each layer, and find the smallest-numbered layer $\ell$ for which we can
%completely decode the difference between the two IBFs at layer $\ell$ and all
%higher layers. If all layers are decodable, we have determined the exact
%difference; otherwise, we multiply the number of differing elements in layers
%$\ell$ and above by $2^\ell $, as in Figure~\ref{fig:strataFull}.
\end{remark}
\fi

\subsection{Analysis}
\ifFull{\label{app:analysis}}\else{\label{sec:analysis}}\fi{}
In this section, we extend previous analyses~\cite{eg-sirtd-11,eguv-esswp-11} to
bound the failure probability for the functioning of an invertible Bloom filter
\ifStrata
and the strata estimator
\fi
to be less than a given parameter, $\epsilon>0$, which
need not be a constant (e.g., its value could be a function of other
parameters).

\begin{theorem}
\ifFull \label{thm:bloom-decodeFull} \else \label{thm:bloom-decode} \fi{} 
Suppose $X$ and $Y$ are sets with $m$ elements in their
symmetric difference, i.e., $m=|X\bigtriangleup Y|$, and let
$\epsilon>0$ be an arbitrary real number.
Let $A$ and $B$ be invertible Bloom filters
built from $X$ and $Y$, respectively, such that each IBF has
$\lambda\ge k+\lceil\log k\rceil$ bits in its \textsf{gSum} field, i.e.,
the range of $g$ is $[1,2^\lambda)$,
and each IBF has at least
$2km$ cells, where $k > \lceil \log (m/\epsilon)\rceil+1$ is the
number of hash functions used.
Then the \textsf{listItems} method for the IBF $C$ resulting from the
\textsf{subtract}$(A,B,C)$ method will list all $m$
elements of $X\bigtriangleup Y$ and identify which belong
to $X\setminus Y$ and which belong to $Y\setminus X$ with 
probability at least $1-\epsilon$.
\end{theorem}
\begin{proof}
\ifFull
There are at least $km$ empty cells in $C$; hence,
the probability that an element 
$z\in X\bigtriangleup Y$ collides with another element
in $X\bigtriangleup Y$ in each of its $k$ cells
is at most $2^{-k}$, independent of 
the locations chosen in $C$ by the other 
elements in $X\bigtriangleup Y$.
Thus, the probability that any element in $X\bigtriangleup Y$ has $k$
collisions is at most $m\, 2^{-k} \le \epsilon/2$.
Therefore, with probability at least $1-\epsilon/2$, every 
element in $X\bigtriangleup Y$ has at least one cell in $C$ where it is
the only occupant.

Next, consider the probability that a cell has
$|\textsf{count}|=1$ and $|\textsf{gSum}| = g(|\textsf{idSum}|)$ but
nevertheless has two or more occupants.
Since the \textsf{gSum} field has
at least $k$ bits,
the probability that a cell is ``bad'' in this way
is at most
$2^{-\lambda}$. 
Thus, since there are at most $km$ non-zero cells,
the probability we have any such bad cell in $C$ is at
most
\[
km\, 2^{-\lambda} \le m 2^{-k} \le \epsilon/2.
\]
This completes the proof.
\else
See Appendix \ref{app:proofs}.
\fi
\end{proof}

%\ifFull
%%% A corollary in terms of U doesn't make sense
%We also have the following.
%
%\fi{}
%\begin{corollary}
%\ifFull \label{cor:bloom-decodeFull} \else \label{cor:bloom-decode} \fi{} 
%Let $X$ and $Y$ be two sets
%and let $m$ be the size of their symmetric difference.
%Then, using IBFs $A$ and $B$ defined with $O(\log U)$ hash functions and
%$O(m\log U)$ cells, and with $O(\log U)$ bits per cell,
%we can decode $X\setminus Y$ and $Y\setminus X$ from the difference
%of $A$ and $B$ with probability at least $1-U^{-c}$, for any constant
%$c>0$.
%\end{corollary}
%\begin{proof}
%\ifFull
%{
%For the success probability,
%apply 
%Theorem~\ifFull{\ref{thm:bloom-decodeFull}}\else{\ref{thm:bloom-decode}}\fi{}  with $\epsilon=U^{-c}$. 
%To achieve the word-size bound, note that it is sufficient
%to do all arithmetic modulo $2^\lambda$ and use identifiers that are
%numbered $1$ to $U$.
%}
%\else
%{
%    See Corollary \ref{cor:bloom-decodeFull} in Appendix \ref{app:proofs}.
%}
%\fi{}
%\end{proof}

To avoid infinite loops, we make a small change to \textsf{listItems}, forcing
it to stop decoding after $m$ items have been decoded regardless of whether
there remain any decodable cells. This change does not affect the failure probability and with it the running time is always $O(mk)$.

\ifStrata
Let us next consider the accuracy of our strata estimator scheme.
Suppose the symmetric difference
between two sets $X$ and $Y$, has cardinality $m$, and we have
constructed a strata estimator $S$ as the difference between a strata
hierarchy for
$X$ and a strata hierarchy for $Y$ (using the same hash functions and
size parameters).
Then the $i$th stratum in $S$ has expected size $m/2^i$. 
Define the \emph{cardinality} of a stratum in $S$ to be the number of
elements in the symmetric difference of $X$ and $Y$
that are mapped there by the sampling
process (independent of whether or not
we can successfully decode them in that level's IBF).
\ifFull
\begin{lemma}
\ifFull \label{lem:strata-sizesFull} \else \label{lem:strata-sizes} \fi{} 
For any $\epsilon>0$, with probability $1-\epsilon$, the
cardinality of each of the strata from $0$ to $i$ in $S$
is within a multiplicative factor of
$(1\pm\delta)$ of its expectation, for
$\delta \ge \sqrt{4(2+\log (1/\epsilon))(2^i/m)\ln 2}$.
\end{lemma}
\begin{proof}
\ifFull
Let $Z_j$ be a random variable representing the size of the $j$th stratum,
for $j=0,1,\ldots,i$,
and let 
$\mu_j = E(Z_j) = m/2^j$.
By standard Chernoff bounds, 
\ifFull
% full version is in appendix, but citation still affecting length
%(e.g., see~\cite{mu-pcrap-05,mr-ra-95}),
\fi{}
for $\delta > 0$,
%using align instead compresses space,
% but why not also use it in full version
%\ifFull
%\[
%  \Pr\left(Z_j > (1+\delta)\mu_j\right) \,< \,e^{-\mu_j\delta^2/4} 
%\]
%and
%\[
%  \Pr\left(Z_j < (1-\delta)\mu_j\right)  \,<\, e^{-\mu_j\delta^2/4}.
%\]
%\else
\begin{align} 
  \Pr\left(Z_j > (1+\delta)\mu_j\right) & < e^{-\mu_j\delta^2/4} \\
  \mbox{and \quad} 
  Pr\left(Z_j < (1-\delta)\mu_j\right) & < e^{-\mu_j\delta^2/4}.
\end{align}
%\fi{}
By taking $\delta \ge \sqrt{4(2+\log (1/\epsilon))(2^i/m)\ln 2}$, 
we have that the probability that a particular $Z_j$ is
not within a multiplicative factor of
$1\pm \delta$ of its expectation is at most
\[
2e^{-\mu_j\delta^2/4} \,\le\, 2^{-(1+\log (1/\epsilon))2^{i-j}} 
   \, =\, \frac{(\epsilon/{2})}{2^{2^{i-j}}}.
\]
Thus, by a union bound, the probability that any $Z_j$
is not within a multiplicative factor of $1\pm \delta$ of its
expectation is at most
\[
\sum_{j=0}^i \frac{(\epsilon/{2})}{2^{2^{i-j}}}
\,
= 
\,
({\epsilon}/{2})  \sum_{j=0}^i 2^{-2^{i-j}} ,
\]
which is at most
$\epsilon$.
\else
{
    See Lemma \ref{lem:strata-sizesFull} in Appendix \ref{app:proofs}. 
}
\fi{}
\end{proof}

Combining the above results,
we have the following:
\fi{}
\begin{theorem}
\ifFull \label{thm:strataFull} \else \label{thm:strata} \fi{} 
Let $X$ and $Y$ whose symmetric difference has size $m$,
and let $0<\epsilon<1$ and $0<\delta<1$ be arbitrary real numbers.
Suppose we encode $X$ and $Y$ in a strata estimator, $S$, of height
$\lceil\log U\rceil$, 
with each member IBF having $2km'$ cells, $k$ hash functions, and 
\textsf{gSum} size of $k+\lceil\log k\rceil$ bits, for 
$k\ge \lceil\log 2m'/\epsilon\rceil+1$ and 
$m' \ge 4\delta^{-2}(2+\log (2/\epsilon))\ln 2$,
which is $\Omega(\delta^{-2}\log (1/\epsilon))$. 
Then, with probability at least
$1-\epsilon$, the estimate resulting from our decoding algorithm,
run on the first IBF that is not too full to decode, will
be within a factor of $1\pm\delta$ of $m$.
\end{theorem}

\begin{proof}
\ifFull
{
By Lemma~\ifFull{\ref{lem:strata-sizesFull}}\else{\ref{lem:strata-sizes}}\fi{},
for $\delta = \sqrt{4(2+\log (2/\epsilon))(2^i/m)\ln 2}$, 
the cardinalities of the first $i$ levels of a strata
estimator are each within a factor of $1\pm\delta$ 
of its expectation, $m/2^i$, with probability $1-\epsilon/2$.
We don't know the value of $m$ when we construct the
strata estimators, but, by Theorem~\ifFull{\ref{thm:bloom-decodeFull}}\else{\ref{thm:bloom-decode}}\fi{}, 
if each IBF has $2km'$ cells, 
where $k = \lceil \log 2m'/\epsilon\rceil+1$
and $m' \ge 2$, then we can decode a set of
$m'$ elements with probability at least $1-\epsilon/2$.
Thus, we can assume without loss of generality that $m>m'$, for otherwise with high probability we will 
decode the entire strata estimator and determine the correct distance without error.
With probability $1-\epsilon$,
we can achieve an approximation factor of $1\pm\delta$ for our size
estimator by taking
\ifFull
\[
m' \ge 4\delta^{-2}(2+\log (2/\epsilon))\ln 2. 
\]
\else
$m' \ge 4\delta^{-2}(2+\log (2/\epsilon))\ln 2. $
\fi{}
Depending on the level, $i$, where we can first successfully decode
the contents of its IBF,
this gives us a $(1\pm\delta)$ estimate for $m/2^i$, which we convert
into a $(1\pm\delta)$ estimate for $m$ by multiplying by $2^i$.
}
\else
{
    See Theorem \ref{thm:strata} Appendix \ref{app:proofs}. 
}
\fi{}
\end{proof}
\ifFull
Thus, we have the following.
\fi{}
\begin{corollary}
\ifFull \label{cor:strataFull} \else \label{cor:strata} \fi{} 
Let $0<\epsilon<1$ and $0<\delta<1$ be arbitrary real numbers.
Given two sets, $X$ and $Y$, 
taken from a universe of size $U$, 
we can compute an estimate $\hat m$ such that
\[
(1-\delta)|X\bigtriangleup Y|\le {\hat m}\le
(1+\delta)|X\bigtriangleup Y|,
\]
with probability $1-\epsilon$, using
a strata estimator of height $\lceil \log U\rceil$
with each member IBF being of size
$t=\Theta(\delta^{-2}(\log (1/\epsilon))(\log (1/\delta) + \log (1/\epsilon))$
and having
$k=\Theta(\log (1/\delta) + \log (1/\epsilon))$ hash functions. 
The preprocessing time is $O(k|X| + k|Y| + t\log U)$ 
and the time to compute the estimate $\hat m$ is
$O(t\log U)$.
\end{corollary}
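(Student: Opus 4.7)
The plan is to derive the corollary as a direct consequence of Theorem~\ifFull{\ref{thm:strataFull}}\else{\ref{thm:strata}}\fi{}, by instantiating its two parameter inequalities with the smallest values that satisfy them and then reading off the asymptotic size and time bounds.

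First, I would choose $m'$ to be the smallest integer meeting the lower bound $m' \ge 4\delta^{-2}(2+\log(2/\epsilon))\ln 2$, so that $m' = \Theta(\delta^{-2}\log(1/\epsilon))$. Substituting into the constraint $k \ge \lceil\log(2m'/\epsilon)\rceil + 1$, the one routine observation is that $\log(2m'/\epsilon)$ simplifies to $\Theta(\log(1/\delta) + \log(1/\epsilon))$, using $\log\log(1/\epsilon) = O(\log(1/\epsilon))$ to absorb the iterated-log term coming from $\log(2+\log(2/\epsilon))$. This yields the stated $k$, and therefore the number of cells per IBF is $t = 2km' = \Theta(\delta^{-2}\log(1/\epsilon)(\log(1/\delta)+\log(1/\epsilon)))$, matching the claimed bound.

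For preprocessing, I would observe that in the strata construction each element of $X$ or $Y$ hashes once to determine its stratum and is then inserted into a single IBF in $O(k)$ time, so the total element-insertion cost is $O(k(|X|+|Y|))$; allocation and initialization of the $\lceil\log U\rceil$ IBFs contribute an additional $O(t\log U)$, giving the stated preprocessing bound.

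For the query, I would invoke Theorem~\ifFull{\ref{thm:strataFull}}\else{\ref{thm:strata}}\fi{} to conclude that, with probability at least $1-\epsilon$, decoding the first stratum whose difference IBF is not too full produces a $(1\pm\delta)$-approximation of $|X\bigtriangleup Y|$. Executing \textsf{subtract} on each of the $\lceil\log U\rceil$ strata costs $O(t)$ per layer, and the modified \textsf{listItems} runs in $O(mk) = O(t)$ per layer by the discussion following Theorem~\ifFull{\ref{thm:bloom-decodeFull}}\else{\ref{thm:bloom-decode}}\fi{}, so the total query time is $O(t\log U)$. The main (and essentially only) obstacle is the logarithmic arithmetic simplifying $\log(2m'/\epsilon)$ in the first step; everything else is bookkeeping once the theorem is in hand.
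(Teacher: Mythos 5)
Your proposal is correct and follows exactly the route the paper intends: the paper offers no explicit proof of this corollary, simply asserting it follows from Theorem~\ref{thm:strataFull} (``Thus, we have the following''), and your instantiation of $m'$ and $k$ at their minimal admissible values, the simplification of $\log(2m'/\epsilon)$, and the bookkeeping for the preprocessing and query costs supply precisely the omitted calculation. No gaps.
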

\fi

\ifFull
For example, we can achieve an approximation factor of $1\pm o(1)$
with probability $1-U^{-c}$, for any constant $c>0$, with a strata
estimator that has size polylogarithmic in $U$.

\begin{remark}
Suppose we only need to answer the decision version of the problem: 
``Is $|X\bigtriangleup Y| (1\pm \delta)\leq \tau$?'' for some threshold
parameter $\tau$. 
Then we can save a $\log U$ factor in both time and space by only constructing a single
layer of the strata estimator, chosen according to $\tau$. 
\end{remark}
\fi{}

\ifFull
\section{Measuring and Estimating Set Dissimilarity}
\label{app:diss}
Many distances have been defined between pairs of sets, and different choices 
may be appropriate in different applications. For instance, in information retrieval, it 
is useful 
to normalize the distance between sets by weighting distance
by the size of the sets, so that documents on similar topics may 
count as similar even if their lengths may greatly differ. 
In applications where the sets being compared describe the presence or 
absence of a fixed collection of binary attributes, 
Hamming distance may be more appropriate.

In more detail, suppose we are given finite sets $X$ and $Y$ either by listings
of their elements or by bit vectors $\vec X$ and $\vec Y$ that represent their
characteristic functions as subsets of a fixed 
universe~$U$.
There are a host of ways to define notions of distance between $X$
and $Y$, with the following being some of the most well-known:
\begin{itemize}
\item
\emph{Set-theoretic Hamming distance}: 
$H(X,Y)=|X\bigtriangleup Y|=|X\setminus Y|+|Y\setminus X|$,
that is, the number of positions where $\vec X$ and $\vec Y$ differ.
\item
\emph{Jaccard/Tanimoto distance}: $J(X,Y)=|X\bigtriangleup Y|/|X\cup
Y|=1-j(X,Y)$, where $j(X,Y)$ is the 
related \emph{Jaccard index}~\cite{Hamers1989315}, defined as $j(X,Y)=|X\cap
Y|/|X\cup Y|$. The \emph{Tanimoto coefficient}~\cite{l-ptitd-99},
$t(\vec X,\vec Y)= (\vec X \cdot \vec Y)/(||\vec X||^2+||\vec Y||^2 - \vec
X\cdot \vec Y)$ generalizes this distance to vectors.
\item
\emph{Dice dissimilarity}: $D(X,Y)=|X\bigtriangleup Y|/(|X|+|Y|)=1-\gamma(X,Y)$,
where $\gamma(X,Y)$ is 
\emph{Dice's coefficient} 
$\gamma(X,Y)=2|X\cap Y|/(|X|+|Y|)$~\cite{d-maeas-45}.
\item
\emph{Tversky dissimilarity}: $T(X,Y)=1-t(X,Y)$, where $t(X,Y)$ is the
\emph{Tversky index}
$t(X, Y) = |X \cap Y|/(|X\cap Y| + \alpha|X\setminus Y|+ \beta|Y\setminus X|)$
parameterized by $\alpha,\beta\ge 0$~\cite{Tversky1977327}.
For instance, setting $\alpha=\beta=1$ is equivalent to the
Jaccard/Tanimoto index and setting $\alpha=\beta=1/2$ is
equivalent to Dice's coefficient.
Here, we are interested in any constant non-zero 
setting of $\alpha$ and $\beta$.
\end{itemize}
\ifFull
For a detailed discussion of these metrics, indexes, and
coefficients, as well as others, 
please see~\cite{lg-sm-07,sb-bafes-07}.

The Hamming and Jaccard/Tanimoto distances are metrics~\cite{bh-aiist-09,lg-sm-07,l-ptitd-99}, 
whereas Dice dissimilarity
is a semimetric that doesn't always satisfy the
triangle inequality~\cite{c-setra-02,lg-sm-07}.
Depending on $\alpha$ and $\beta$,
Tversky dissimilarity could be
a metric, semimetric, quasimetric, or quasisemimetric, in that it may
or may not satisfy the triangle inequality and/or the symmetry axiom. However,
\else
The Hamming and Jaccard/Tanimoto distances are metrics, whereas the Dice and Tversky 
dissimilarities need not always satisfy the triangle inequality; however,
\fi{}
this difference is not significant for our distance estimation algorithms.

\subsection{Estimates for Other Dissimilarities}
Given $|A|$ and $|B|$ and an accurate estimate for $H(A,B)$,
we can also estimate the cardinalities of the intersection and union
using the identities
\begin{eqnarray*}
|A\cap B| = (|A| +|B|-H(A,B))/2\\
|A\cup B| = (|A| +|B| + H(A,B))/2
\end{eqnarray*}

The estimate for the cardinality of the union preserves the multiplicative error
of the estimate for the Hamming distance. We cannot bound the multiplicative
error  of the estimated intersection cardinality, however, unless it is known to
be at least proportional to the Hamming distance.  

Once we have an accurate estimate for the Hamming distance $H(A,B)$ and the
cardinality of the union $A\cup B$, we can compute similarly accurate estimates
for each of the set-dissimilarity
metrics and semimetrics mentioned in the introduction.
\begin{itemize}
\item 
For the Dice dissimilarity, estimate $D(A,B)$ as $H(A,B)/(|A|+|B|)$. Since
$|A|$ and $|B|$ are exact, the accuracy does not change compared to that for
$H$.
\item 
For the Jaccard distance, estimate $J(A,B)$ as $2H(A,B)/(|A|+|B|+H(A,B))$. The
resulting estimate has a multiplicative error that is at least as good as the
error in estimating $H$.
\item 
For the Tversky dissimilarity,  assume without loss of generality that (as
estimated) $|A\setminus B| > |B\setminus A|$ (else we apply a symmetric formula
with the roles of $A$ and $B$ reversed), let $p$ be our estimate of the ratio
$|B\setminus A|/|A\setminus B|$, and let $q$ be our estimate of the ratio
$|A\cap B|/|A\setminus B|$. We then estimate $T(A,B)=(\alpha+\beta
p)/(\alpha+\beta p+1)$. Both $p$ and $q$ are estimated with additive error
$O(\epsilon)$, and since both the numerator and denominator of our estimate of
$T(A,B)$ are $\Omega(1)$ and are the sum of additively-accurate estimators, they
are also estimated to within multiplicative error $1+O(\epsilon)$. Therefore,
their ratio also has multiplicative error $1+O(\epsilon)$.  
\end{itemize}
\fi{}

%%%%   SODA doesn't want appendices, instead, wants full version inline
%%%%
%\ifMapReduce %no room in combined paper?
%\section{A MapReduce Implementation}
%In this section, we establish that our scheme can be implemented in
%the MapReduce framework.
%
%\begin{theorem}
%\ifFull \label{thm:mapreduceFull} \else \label{thm:mapreduce} \fi{} 
%Suppose we are
%given a collection of sets, $X_1,\ldots,X_n$, taken from a universe
%of size $U$,
%and let $M$ denote an upper bound on the size of a reducer's
%input/output in any round.
%Given a method for mapping likely pairs of sets with small
%dissimilarity in $O(\log_M U)$ MapReduce rounds, we can compute 
%an estimate of the size of the symmetric difference
%for each such pair of sets, with accuracy $1\pm\delta$ with
%probability $1-\epsilon$, for $0<\delta,\epsilon<1$,
%in $O(\log_M U)$ MapReduce steps.
%\end{theorem}
%
%For the proof, see Appendix~\ifFull{\ref{app:mapreduceFull}}\else{\ref{app:mapreduce}}\fi{}.
%As a consequence, for example, if $M$ is $O(U^{1/d})$, for some constant $d\ge1$,
%then our MapReduce algorithm runs in a constant number of rounds.
%\else {}
%\fi{}
\ifMapReduce
\section{A MapReduce Implementation}
\label{app:mapreduce}
Suppose we are
given a collection of sets, $X_1,\ldots,X_n$, taken from a universe
of size $U$,
and let $M$ denote an upper bound on the size of a reducer's
input/output in any round.
Given a method for mapping likely pairs of sets with small
dissimilarity in $O(\log_M U)$ MapReduce rounds, we can compute 
an estimate of the size of the symmetric difference
for each such pair of sets, with accuracy $1\pm\delta$ with
probability $1-\epsilon$, for $0<\delta,\epsilon<1$,
in $O(\log_M U)$ MapReduce steps.
The proof of this fact is as follows.

\begin{proof}
In the MapReduce framework (e.g., see~\cite{dg-msdpl-08,arxiv1004.4708}),
a computation is specified as a sequence of
map, shuffle, and reduce steps that operate on a set
$X=\{x_1,x_2,\ldots,x_p\}$ of values:
\begin{itemize}
\item
A {\em map step} applies a function, $\mu$, to each value, $x_i$, to
produce a finite set of key-value pairs $(k,v)$. To allow for
parallel
execution, the computation of the function $\mu(x_i)$ must depend
only on
$x_i$.

\item 
A {\em shuffle step} collects all the key-value pairs produced in the
previous map step, and produces a set of lists,
$L_k=(k;v_1,v_2,\ldots)$,
where each such list consists of all the values, $v_j$, such that
$k_j=k$
for a key $k$ assigned in the map step.

\item
A {\em reduce step} applies a function, $\rho$, to each list
$L_k=(k;v_1,v_2,\ldots)$, formed in the shuffle step, to produce a
set of
values, $y_1,y_2,\ldots\,$. The reduction function, $\rho$, is
allowed to
be defined sequentially on $L_k$, but should be independent of other
lists
$L_{k'}$ where $k'\not=k$.
\end{itemize}

\ifMapReduce
In Theorem~\ref{thm:mapreduce}, 
we follow the I/O-bound approach~\cite{arxiv1004.4708},
where we define $M$ to be an upper bound on the I/O-buffer memory
size for reducers used in a given MapReduce algorithm.
\fi

\paragraph{Preprocessing phase.}
In the preprocessing phase, we map each set $X$ to a hierarchy of
IBFs in parallel. For each such IBF, we need to compute the 
\textsf{count}, \textsf{idSum}, and \textsf{gSum} values for each cell.
We do this by first performing a map operation for each $x$ in $X$,
which maps $x$ to $k$ triples $(h_i(x), x, g(x))$. There might be too
many of these to directly send to one reducer, however, so we use the
\emph{invisible funnel} technique 
of Goodrich \textit{et al.}~\cite{arxiv1101.1902} to map these values to
a reducer for $h_i(x)$ in batches of size $O(M)$, which computes
intermediate sums for
the \textsf{count}, \textsf{idSum}, and \textsf{gSum} values.
After $O(\log_M |X|)$ rounds, we will have computed all these values
for the IBF for $X$.

\paragraph{Estimation step.}
Once we have a stratified IBF for each set $X$ in our collection, we
map these representations 
according to one of the previous heuristics for collecting
sets likely to be close in their dissimilarity distance
(e.g., see~\cite{vcl-epssj-10,wwlw-mdndo-10}).
We assume this step takes at most $O(\log_M U)$ steps, where $U$ is
the size of the universe of set values.

For each such pair of strata estimators, $S_X$ and $S_Y$, we map this
pair to a reducer, which then computes the estimation for the distance
between $X$ and $Y$ according to our algorithm, given above.
Of course, this assumes that $M$ is at least the size of a strata
estimator, which we feel is a reasonable assumption.
This completes the proof.
\end{proof}
\fi{}

\ifAllPairs{
% our current results aren't better than existing techniques.
% refs and sections related to all pairs have been moved to all-pairs.tex
% in case we want to revise them later.
%\input{all-pairs} 
}\fi{}

\section{Frequency Moment Estimation}
\label{sec:fme}
We implement our SDC sketches using frequency moment estimation techniques.
Let $x$ be a vector of length $U$, and suppose we have a data stream of length $m$, 
consisting of a sequence of updates to $x$ of the form
$(i_1, v_1), \ldots, (i_m, v_m) \in [U] \times [-M,M]$ for some $M > 0$.  
That is, each update is a pair $(i,v)$ which updates the $i$th coordinate of $x$ such that  $x_i \rightarrow x_i + v$.  

The frequency moment of a data stream is given by
\[
F_p = \sum_{i\in [U]} x_i^p = \|x\|_p^p.
\]

Since the seminal paper by Alon~\emph{et al.}~\cite{ams-scafm-99}, frequency moment
estimation has been an area of significant research interest. Indeed the full literature
on the subject is too rich to survey here. Instead, see e.g. the recent work by 
Kane~\emph{et al.}~\cite{knpw-fmedsop-11} and the references therein.
Kane~\emph{et al.}~\cite{knpw-fmedsop-11} gave algorithms for estimating $F_p$, $p \in (0,2)$.
Their algorithm requires $O(\log^2(1/\delta))$ time per update and  
$O(\delta^{-2} \log (mM))$ space.
However, faster results are known for estimating the second frequency moment $F_2$ with
constant probability.  
Thorup and Zhang~\cite{tz-tb4uh-04} 
and 
Charikar\emph{et al.}~\cite{ccf-ffids-02}
independently improve upon the original result of 
Alon~\emph{et al.}~\cite{ams-scafm-99}, 
 to achieve an optimal $O(1)$
update time using $O(\delta^{-2} \log (mM))$ space.

Given a sparse vector $X$ of length $m$ with coordinates bounded by $[-M,M]$,  
we can estimate $\|X\|_2^2$ by treating it as a data
stream and running the algorithm of Thorup and Zhang.
The algorithm computes a \emph{sketch} $S_X$ of $X$,
of size $O(\delta^{-2}\log{mM})$ such that $\|S_X\|_2^2$ is within a factor of
$O(1 \pm \delta)$ of $\|X\|_2^2$ with constant probability.
We can improve the probability bound to any arbitrary $O(1 -
\epsilon)$ by running the algorithm $O(\log (1/\epsilon))$ times independently to produce
$O(\log(1/\epsilon))$ independent sketches $S_{Xi}$, and taking the
median of $\|S_{Xi}\|_p^p.$ This strategy takes $O(\log(1/\epsilon))$ time to process each
non-zero element in $X$, and the space required is $O(\delta^{-2}\log(1/\epsilon) \log(mM))$.

Furthermore, each sketch is linear, and therefore we can estimate the frequency moment
of the difference of two sparse vectors $X$ and $Y$ by subtracting their sketches;
$\|S_X - S_Y\|_2^2$ is within a $O(1 \pm \delta)$ factor of $\|X-Y\|_2^2$ with
constant probability, and we can maintain $O(\log (1/\epsilon))$ independent
sketches to achieve probability $O(1 - \epsilon)$.

Now, suppose we want to estimate the Hamming distance between two sets.
We treat each set as a sparse bit-vector and use the fact that 
%(\textbf{TODO:} something about $\ell_0^2$ vs $\ell_2^2)$
the Hamming distance between two bit vectors is equivalent to the squared
Euclidean distance, 
which is just the second frequency moment of the difference of the vectors. 
 Then we can apply the above strategy for sparse
vectors to produce $O(\log (1/\epsilon))$ sketches for each set in 
$O(\log(1/\epsilon))$ time per element, and we
subtract all $O(\log 1/\epsilon)$ pairs of sketches in $O(\delta^{-2}\log (1/\epsilon)
\log U)$ time. Finally we compute the second frequency moment of each sketch and
take the median over all estimations in $O(\log 1/\epsilon)$ time.

%Furthermore, we can apply this strategy to an entire family of sets. 
We summarize this strategy in the following theorem.

\begin{theorem}
Let $0<\epsilon<1$ and $0<\delta<1$ be arbitrary real numbers.
Given two sets, $X$ and $Y$, 
taken from a universe of size $U$, 
we can compute an estimate $\hat m$ such that
\[
(1-\delta)|X\bigtriangleup Y|\le {\hat m}\le
(1+\delta)|X\bigtriangleup Y|,
\]
with probability $1-\epsilon$, using a sketch of size 
$O(\delta^{-2}\log (1/\epsilon) \log U)$
The preprocessing time, including the time required to initialize the sketch is 
$O(\delta^{-2}\log (1/\epsilon) \log U + (|Y| + |X|)\log(1/\epsilon))$ 
and the time to compute the estimate $\hat m$ is
$O(\delta^{-2}\log(1/\epsilon)\log U)$.
\end{theorem}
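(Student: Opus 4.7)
The plan is to reduce the set-difference cardinality problem to estimating the second frequency moment of the difference of two sparse 0/1 vectors, and then to invoke the Thorup--Zhang sketch already described above. First I would represent each set $X$ (respectively $Y$) by its characteristic vector $\vec{X} \in \{0,1\}^U$, so that $\vec{X}-\vec{Y}$ is a vector supported on $X \bigtriangleup Y$ whose nonzero entries are $\pm 1$. This gives the identity $\|\vec{X}-\vec{Y}\|_2^2 = |X \bigtriangleup Y|$, which is exactly the $F_2$ frequency moment of $\vec{X}-\vec{Y}$.

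Next I would build, for each set, not a single Thorup--Zhang sketch but $t = \Theta(\log(1/\epsilon))$ independent copies $S_{X,1},\ldots,S_{X,t}$, each of size $O(\delta^{-2}\log U)$ (note $mM = O(U)$ here, since entries are bounded by $1$ and the support has size at most $U$). Each sketch can be populated in $O(1)$ time per nonzero coordinate, so preprocessing one set takes $O(|X|\log(1/\epsilon))$ time plus the $O(\delta^{-2}\log(1/\epsilon)\log U)$ cost of initializing the $t$ sketch tables; and likewise for $Y$. This accounts for the claimed preprocessing bound.

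To answer a query, I would exploit the \emph{linearity} of the Thorup--Zhang sketch: for every index $i$, the sketch $S_{X-Y,i} := S_{X,i} - S_{Y,i}$ is distributed as a valid sketch of $\vec{X}-\vec{Y}$. Then $\|S_{X-Y,i}\|_2^2$ is a $(1\pm\delta)$-approximation of $|X\bigtriangleup Y|$ with some constant probability $p_0 > 1/2$ (the native Thorup--Zhang guarantee). I would compute all $t$ estimates and output their median $\hat m$. Each of the subtractions and norm computations costs $O(\delta^{-2}\log U)$, so the total query time is $O(\delta^{-2}\log(1/\epsilon)\log U)$ as claimed.

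The last step, and really the only nontrivial piece, is the probability amplification: a standard Chernoff bound on the number of estimators that land in $[(1-\delta)|X\bigtriangleup Y|,(1+\delta)|X\bigtriangleup Y|]$ shows that for $t = \Theta(\log(1/\epsilon))$ independent trials each succeeding with constant probability $p_0 > 1/2$, the median falls in this interval except with probability $e^{-\Omega(t)} \le \epsilon$. The only potential subtlety is that the $S_{X,i}$ and $S_{Y,i}$ must use the same hash functions within each copy $i$ but independent hash functions across different $i$, so that the $t$ estimates of $\|\vec{X}-\vec{Y}\|_2^2$ are mutually independent and the Chernoff bound applies. With that bookkeeping in place, the stated space, preprocessing, and query bounds follow directly.
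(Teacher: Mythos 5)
Your proposal matches the paper's own argument: the paper likewise reduces $|X\bigtriangleup Y|$ to the second frequency moment of the difference of characteristic bit-vectors, applies the linear Thorup--Zhang $F_2$ sketch, and amplifies the constant success probability by taking the median of $O(\log(1/\epsilon))$ independent copies. Your added care about using shared hash functions within each copy and independence across copies is a correct (and welcome) piece of bookkeeping that the paper leaves implicit.
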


\ifFull
%We did the general framework, now lets show it off for specific examples
%\section{Set-Difference Range Searching}
\section{Set-Difference Range Searching Details}
\label{sec:sdDetails}
%Many data structures for designed to answer geometric searching queries 
%are based on a decomposition scheme.
In this section, we look at several examples of canonical semigroup range searching
structures and show how to transform them into data structures that can
efficiently answer set-difference range queries.  
Specifically, we give more detail for the results which were summarized in 
Table~\ref{tab:summaryFull}. 

%
%NOTE: do to our notation earlier, there isn't really a distinction between
%intersection and range searching
%
%We examine two special cases of geometric searching problems. 
%First we examine data structures for \emph{range searching}
%in which the result of a query 
%Then we examine a data structure for \emph{intersection searching} in which 
%$\diamond(\gamma,S)$ is the subset of $S$ which intersects $\gamma$.
%Regardless of the specific definition of $\diamond$, the data structures
%are based on decomposition schemes. 
%Therefore we can apply Theorem \ref{thmTransform} to  
%transform them into a data structures for 
%set-difference range searching or set-difference intersection searching.  
%
\subsection{Orthogonal Range Searching}
In the orthogonal range searching problem, our task is to preprocess a set of points in $R^d$
 such that given a query range defined by an axis-parallel hyper rectangle, we can efficiently report or count the set of points contained in the hyper-rectangle
The range tree is a classic canonical semigroup range searching data structure that
solves this problem.

In one dimension, a range tree is a binary tree in which each leaf represents a
small range containing a single point, and an in-order traversal of the leaves
corresponds to the sorted order of those points. The range of any interior node
is the union of the
ranges of its descendant leaves. 
Each node stores the value of the aggregate function on its 
canonical set---the
set of points in its range.

%Thus the range of the root node contains all the points stored in the range tree. 
%In a range tree for the orthogonal range
%counting problem, each node stores the sum of the weights of the points in its
%range. 

We can easily build a $d \geq 2$ dimensional range tree by building a range tree
over the $d^{\textnormal{th}}$ dimension, and at each internal node we store a
($d-1$)-dimensional dimensional range tree over the points in that node's range.
To answer a query using a range tree, we find a set of $O(\log^d n)$ canonical
ranges which exactly cover the query range and report the sum of the 
aggregate functions of each of these canonical ranges. 

Thus  
we can apply Theorem \ref{thm:fixed-cardinality-reportingFull},
\ref{thm:variable-cardinality-reportingFull}, or \ref{thm:size-estimationFull} 
 to obtain a data structure for set-difference orthogonal range queries.
We preprocess each canonical set into an appropriate sketch, and instead of storing the
value of the aggregate function in each node, instead we store a pointer to the
corresponding sketch for that node's canonical set. When answering a query, instead of
summing the aggregate functions, we simply add or subtract the sketches.

%Note that although it is common to apply fractional cascading techniques to gain
%a $O(\log n)$ speedup, we cannot apply the above transformation in conjunction
%with fractional cascading techniques. However, we do gain substantial savings in
%both space and query time when the size of the set difference is expected to be
%small. 
%
\subsection{Simplex Range Searching}
Partition trees~\cite{Matousek1992} 
are a linear space data structure for the simplex range
searching problem. In this problem we preprocess a set of points $P$ in $\mathbb{R}^d$
so that given a $d$-dimensional query simplex $\rho$ we can return the set of points 
$\{p \in P | p \in \rho\}$.   
Partition trees are  based on recursively partitioning the plane into a set of regions such that each
region contains at least a constant fraction of the points in $P$. 
These regions form canonical subsets and 
thus partition trees are 
%a data structure for a geometric searching problem
%where $\diamond =\, \in$ and $\gamma$ is a $d$-dimensional simplex. 
%Furthermore, they are 
%based on a decomposition scheme.  %TODO \emph{
a canonical semigroup range searching data structure %} 
Therefore, we can apply the transformation in
Theorem \ref{thm:fixed-cardinality-reportingFull},
\ref{thm:variable-cardinality-reportingFull}, \ref{thm:size-estimationFull} 
to obtain a data structure for set-difference simplex
range queries.  Note that although there has been significant research on how to
choose the partitions, Theorems \ref{thm:fixed-cardinality-reportingFull},
\ref{thm:variable-cardinality-reportingFull} and \ref{thm:size-estimationFull} can be applied independent of these
details.  We only require that the data structure  precomputes answers for each
member of a family of canonical subsets and answers queries by a combination of
these precomputed answers.
We simply replace the precomputed answe in the standard version of the data structure
for each canonical subset with the appropriate SDR or SDC sketching data structure.
Then, when answering a query, instead of summing the standard precomputed answers, we
simply add the sketches of all the canonical subsets in each query range, and then
subtract the two resulting sketches. 

%
%{\bf:} 
%Can we also apply this to the other logarithmic query time simplex range
%counting data structure based on cuttings (I think we can) and obtain a
%similar query time / space tradeoff?
%Cutting trees are also based on a partition scheme (canonical subsets) and they
%are a data structure for counting, so I'm pretty sure they work here, but I
%don't know how the combination of cutting trees and partition trees works. %
%for sake of time, forget time / space tradeoff. Maybe put in cutting trees?
%
%
%If we had more time, we could do a whole separate section on intersection
%queries. In agarwal's chapter he makes a distinction between intersection
%queries and range searching queries. Here I'm just grouping stabbing queries in
%with the range searching queries. Maybe for the journal version?
%
\subsection{Stabbing Queries}
Given a set $S$ of line segments in the plane and a vertical query line, we want to
report or count the line segments which intersect the query line. This is known
as a vertical line stabbing query.
One common solution to this problem is the segment tree.
A segment tree is an augmented binary tree, similar to the range tree.
%In the range tree, 
%Each leaf represents a small range containing only a single
%point in $S$. For a segment tree this point is one of the endpoints of a line segment $s \in
%S$. 
Each leaf represents a small range containing a single endpoint of one line
segment, 
%Also like a range tree, 
and the range $r_v$ of any interior node $v$
is %the union of the ranges of its children, or equivalently 
the union of the
ranges of its descendant leaves. Each segment $s = (x_s, y_s), (x_e, y_e)$
corresponds to the range $r(s) = [x_s, x_e]$. Let $p(v)$ denote the parent of $v$.  
We store a segment $s$ in each node $v$ such
that $r(s)$ spans $r_v$ but does not span $r_{p(v)}$.  
%Note that this implies
%that each segment is stored at most $O(\log n)$ nodes in the tree, and also that
%each segment is stored at most once on any root to leaf path in the tree. 
To answer a query we search in the segment tree for the $x$ coordinate of the 
vertical query line $r$. This root to leaf path in the tree defines a 
$O(\log n)$ size sub-family of canonical subsets of $S$, which is
exactly those canonical subsets which cover the query point $x$. 
%
%Interval trees are an alternate
%data structure which solves this problem. However while segment trees can be
%naturally generalized to higher dimensions, interval trees cannot.
%Furthermore interval trees are not based on a decomposition scheme, 
%but segment trees are based on a decomposition scheme and are therefore
%a more appropriate choice in the context of this paper. 

Hence, the segment tree is % based a decomposition scheme for the segments in $S$. %TODO \emph{
a canonical semigroup range searching data structure %} 
Thus, given a segment tree for
vertical line stabbing (counting) queries we can apply Theorems
\ref{thm:fixed-cardinality-reportingFull},
\ref{thm:variable-cardinality-reportingFull}
or \ref{thm:size-estimationFull}
to transform it into a data structure for set-difference
stabbing queries. The details are similar to the previous range searching data structures. 

\subsection{Partial Sum Queries}
% requires group, not semigroup
Let $A$ be a $d$-dimensional array for constant $d$.
A partial sum query has as input a pair of $d$-dimensional indices
$([i_1, \ldots , i_d], [j_1, \ldots j_d])$ which define a
$d$-dimensional (rectangular) range,  and to answer the query we
must return the sum of elements whose indices fall in that range.
This is similar to the orthogonal range query problem except that the points in
the set are restricted to a grid. 
%Here $\gamma$ corresponds to the set of
%indices which fall in between the query indices, $\diamond$ returns the set of
%elements with indices $\in \gamma$.  

Although the following techniques extend well to any (constant) dimension, for
clarity of exposition we describe the one and two-dimensional case.
In the one dimensional case, we must preprocess an array of length $n$ such that
for any pair of indices $1 \leq i \leq j \leq n$ we can return the sum of
elements with indices between $i$ and $j$  in constant time.
In linear time we build an array $B$ such that for any index $k$, $B[k] =
\sum_{i=1}^k A[i]$. 
Then to answer a query $Q((i,j), A)$ we return $B[j] -B[i]$.

In the two dimensional case, we preprocess by initializing $B$ with the values
$B[k, \ell] = \sum_{i=1}^k \sum_{j=1}^\ell A[k,\ell]$. This also runs in linear
time (with respect to the size of the input array $A$). 
To answer a query $Q(([i,j], [k,\ell]), A)$ we return 
$B[k, \ell] - B[i, \ell] - B[k, j] + B[i,j]$. 
See Figure~\ref{figPrefixSums} for an illustration.

In the above data structure, we can think of the array $A$ as a set of grid
cells, and each cell of $B$ represents a canonical subset of grid cell of
$A$. 
The data structure is a canonical group range searching structure which 
requires us to take  %TODO \emph{canonical semigroup range searching data structure} 
set differences of canonical subsets. This is not a problem when the elements
corresponding to the grid cells are drawn from a group, since we can just
subtract the sum of the subsets which we want to subtract.  
We can apply Theorems \ref{thm:fixed-cardinality-reportingFull},
\ref{thm:variable-cardinality-reportingFull} or \ref{thm:size-estimationFull}
%by replacing the sum at each
%$B[i,j]$ with an IBF, hierarchy of IBFs, or strata estimator over the elements in the sum. 
%Thus we 
to obtain an efficient structure for set-difference queries on a contiguous
sub-array.

\begin{figure}[h!tb]
\centering
\includegraphics[scale=.4]{prefixSums.pdf}
\caption{
\label{figPrefixSums}%
A data structure for
partial sum queries when subtraction is allowed.
The prefix sums 
required to answer any partial sum query can be precomputed in constant time per
array element. Given these precomputed canonical subsets, we can answer a
partial sum query in constant time (assuming $d$ is constant). We illustrate
this concept in the case where $d = 2$.
The region $X$ is all the elements of $B$ with
indices $<i,<j$. $Y$ is all elements of $B$ with indices  $<i, < k$. $Z$ is
all elements with indices $<k, <j$. $Z$ is all elements with indices
$\leq k, \leq \ell$. The query rectangle $r$ is all elements between
indices $i,j$ and $k, \ell$ inclusive. Thus  $\sum w(r) = \sum w(W) - \sum
w(Z) - \sum w(Y) + \sum w(X)$. }
\end{figure}

\section{Conclusion and Open Problems}
We have given a general framework for converting canonical
range-searching data structures into data structures for answering
set-difference range queries, with efficient time and space bounds.

Some interesting open problems include the following:
\begin{itemize}
\item
Is it possible to
extend the results of this paper to difference range queries
on strings where distance is measured using Smith-Waterman (edit)
distance?
\item
What are some time and space lower
bounds for set-difference range queries?
\item
Can set-difference range queries be efficiently constructed for data
structures, like BBD-trees~\cite{mp-ddsar-10}
and BAR-trees~\cite{dgk-bartt-00,dgk-bartc-01}, which can be used to answer
approximate range queries?
\end{itemize}
\fi{}

\ifFull
\subsection*{Acknowledgments}
The authors' research was supported in part by NSF grant
0830403 and by the Office of Naval Research under grant
N00014-08-1-1015.
\fi{}

\ifFull{}
\clearpage
{
\raggedright
\bibliographystyle{abuser}
\bibliography{bloom}
}
\else
{
\raggedright
%\bibliographystyle{abuser}
%{\scriptsize
\bibliographystyle{abbrv}
%Needs to be merged with Range set-diff bibliographies
\bibliography{bloom}
}

\clearpage
\Fulltrue
\begin{appendix}
\section{Figures}
\label{app:figures}
\begin{figure}[h!tb]
\centering
\includegraphics[scale=.4]{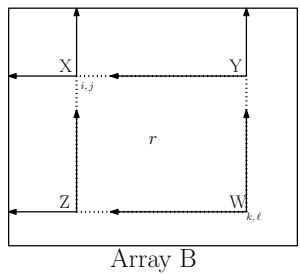}
\caption{
\label{figPrefixSums}%
A data structure for
partial sum queries when subtraction is allowed.
The prefix sums 
required to answer any partial sum query can be precomputed in constant time per
array element. Given these precomputed canonical subsets, we can answer a
partial sum query in constant time (assuming $d$ is constant). We illustrate
this concept in the case where $d = 2$.
The region $X$ is all the elements of $B$ with
indices $<i,<j$. $Y$ is all elements of $B$ with indices  $<i, < k$. $Z$ is
all elements with indices $<k, <j$. $Z$ is all elements with indices
$\leq k, \leq \ell$. The query rectangle $r$ is all elements between
indices $i,j$ and $k, \ell$ inclusive. Thus  $\sum w(r) = \sum w(W) - \sum
w(Z) - \sum w(Y) + \sum w(X)$. }
\end{figure}
\section{IBF Pseudo-Code}
\label{app:IBF-code} %label is old, but keeps links from breaking
In this section we include pseudo-code illustrating the IBF operations.

%% Including figure in the main body instead
%\begin{figure}[ht!]
%\centering
%\includegraphics[width=\columnwidth, trim = 0.1in 0.75in 0.25in 0.75in, clip]{differences.pdf}
%\caption{\label{fig:differences} Illustrating the set-difference
%range query problem. The images in (a) and (b) have four major
%differences, three of which are inside the common query range.
%The image (a) is a public-domain engraving of an astronomer by Albrecht
%D{\"u}rer, from the title page of 
%\textit{Messahalah, De scientia motus orbis} (1504).}
%\end{figure}
%
%
%
%added minipages to layout in two columns and compress space
\ifStrata  %%% should we keep this figure in?
\vspace{1.246in}
\begin{figure}[hb]% Placing it "h" makes more sense logically but takes more room
%\hspace{.1in}
%\begin{minipage}[t]{.42\textwidth}
\noindent{\sf strata}$(S)$:
  \begin{algorithmic}
    \FOR{each element $x$ in $S$}
    \STATE $y = h(x)$; $\ell = \log_2 (y \,\,\&\,\,\sim\! (y - 1))$
    \STATE add $x$ to the IBF for layer $\ell$
    \ENDFOR
  \end{algorithmic}
%\end{minipage}
%\hspace{.2in}
%\begin{minipage}[t]{.58\textwidth}
\medskip
\noindent{\sf estimate-distance}$(S,T)$:
\begin{algorithmic}
\STATE allocate storage for an IBF $C$
\STATE $d=0$
\FOR{$i=\log_2 U, \log_2 U -1,\dots, 2,1, 0$}
\STATE {\sf subtract}(layer $i$ of $S$, layer $i$ of $T$, $C$)
\IF{$C$ can be completely decoded}
\STATE add the number of decoded elements to $d$
\ELSE
\STATE return $2^{i+1}\times d$
\ENDIF
\ENDFOR 
\STATE return $d$
\end{algorithmic}
%\end{minipage}
\caption{Constructing a strata estimator, and estimating distances from two strata estimators.}
\ifFull \label{fig:strataFull} \else \label{fig:strata} \fi{} 
\end{figure}
\fi{}
\ifFull
%\begin{figure}[hbt]

\vspace{1em}
\noindent
{\sf insert}$(x)$:
\begin{algorithmic}[100]
\FOR {each $h_i(x)$ value, for $i=1,\ldots,k$}
\STATE
add $1$ to $T[h_i(x)].\mbox{\textsf{count}}$
\STATE
add $x$ to $T[h_i(x)].\mbox{\textsf{idSum}}$
\STATE
add $g(x)$ to $T[h_i(x)].\mbox{\textsf{gSum}}$
\ENDFOR
\end{algorithmic}

\medskip
\noindent
{\sf delete}$(x)$:
\begin{algorithmic}[100]
\FOR {each $h_i(x)$ value, for $i=1,\ldots,k$}
\STATE
subtract $1$ from $T[h_i(x)].\mbox{\textsf{count}}$
\STATE
subtract $x$ from $T[h_i(x)].\mbox{\textsf{idSum}}$
\STATE
subtract $g(x)$ from $T[h_i(x)].\mbox{\textsf{gSum}}$
\ENDFOR
\end{algorithmic}

\medskip
\noindent
{\sf isMember}$(x)$:
\begin{algorithmic}[100]
\FOR {each $h_i(x)$ value, for $i=1,\ldots,k$}
\IF {$T[h_i(x)].\mbox{\textsf{count}}= 0$ \textbf{and}
      $T[h_i(x)].\mbox{\textsf{idSum}}= 0$  \textbf{and}
      $T[h_i(x)].\mbox{\textsf{gSum}}= 0$}
\STATE 
\textbf{return} \textbf{false}
\ELSIF {$T[h_i(x)].\mbox{\textsf{count}}= 1$ \textbf{and}
      $T[h_i(x)].\mbox{\textsf{idSum}}= x$  \textbf{and}
      $T[h_i(x)].\mbox{\textsf{gSum}}= g(x)$ }
\STATE 
\textbf{return} \textbf{true}
\ENDIF
\ENDFOR
\STATE \textbf{return} ``not determined''
\end{algorithmic}

\medskip
\noindent
{\sf subtract}$(A,B,C)$:
\begin{algorithmic}[100]
\FOR {$i=0$ to $t-1$}
\STATE $T_C[i].\mbox{\textsf{count}} = T_A[i].\mbox{\textsf{count}} - T_B[i].\mbox{\textsf{count}}$
\STATE $T_C[i].\mbox{\textsf{idSum}} = T_A[i].\mbox{\textsf{idSum}} - T_B[i].\mbox{\textsf{idSum}}$
\STATE $T_C[i].\mbox{\textsf{gSum}} = T_A[i].\mbox{\textsf{gSum}} - T_B[i].\mbox{\textsf{gSum}}$
\ENDFOR
\end{algorithmic}

\medskip
\noindent {\sf listItems}$()$:
\begin{algorithmic}[100]
\WHILE {there is an $i\in[1,t]$ such that 
$T[i].\mbox{\textsf{count}} = 1$ \textbf{or} $T[i].\mbox{\textsf{count}} = -1$}
\IF {$T[h_i(x)].\mbox{\textsf{count}}\,=\, 1$ 
     \textbf{and} $T[h_i(x)].\mbox{\textsf{gSum}}\,=\, 
	     g(T[h_i(x)].\mbox{\textsf{idSum}})$ }
\STATE add the item,
$(T[i].\mbox{\textsf{idSum}})$,
to the ``positive'' output list
\STATE call {\sf delete}($T[i].\mbox{\textsf{idSum}}$)
\ELSIF {$T[h_i(x)].\mbox{\textsf{count}}\,=\, -1$ 
       \textbf{and} $-T[h_i(x)].\mbox{\textsf{gSum}}\,=\, 
       g(-T[h_i(x)].\mbox{\textsf{idSum}})$ }
\STATE add the item,
$(-T[i].\mbox{\textsf{idSum}})$,
to the ``negative'' output list
\STATE call {\sf insert}($-T[i].\mbox{\textsf{idSum}}$)
\ENDIF
\ENDWHILE
\end{algorithmic}

%\caption{Operations supported by an invertible Bloom filter.}
%\ifFull \label{fig:algsFull} \else \label{fig:algs} \fi{} 
%\end{figure}
\fi{}
%
%
%Some are now in main body, and some are for strata

%Reset theorem numbers
\makeatletter
\@addtoreset{theorem}{section}
\makeatother
\section{Proofs}
\label{app:proofs}

\begin{theorem}
\ifFull \label{thm:fixed-cardinality-reportingFull} \else \label{thm:fixed-cardinality-reporting} \fi{} 
Suppose that a fixed limit $m$ on the cardinality of the returned set
differences is known in advance of constructing the data structure, and our
queries must either report the difference if it has cardinality at most $m$, or otherwise report that it is too large.
In this case, we can answer set-difference range queries with probability at least $1 - \epsilon$
for any range space that can be modeled by a 
canonical group or semigroup range searching data structure.
Our solution stores $O(m)$ words of aggregate information per canonical set,
uses a combination operation $\oplus$ that takes time $O(m)$ per combination,
and allows the result of this combination to be decoded to yield the query
results in $O(m)$ time. If the data structure is updated, the aggregate information associated with each changed canonical set can itself be updated in constant time per change.
\end{theorem}
\begin{proof}
\ifFull
We associate with each canonical set an SDR sketch
%IBF 
for its elements that 
%is capable of performing the \textsf{listItems} operation on sets 
supports the \texttt{report} function for outputs of size up to $m$, and we
let $\oplus$ be defined by the operations of adding % and subtracting IBFs 
SDR sketches
as described in the previous section. The result of applying $\oplus$ to the 
%IBFs
SDR sketches
stored at each of the canonical sets  will be a single 
%IBF 
sketch for each of the queried ranges.
Then, to report the set-difference, we call the 
%\textsf{listItems} operation on this IBF, and
\texttt{report} function which
%report the items listed 
outputs the elements if there are fewer than $m$ of them. The SDR sketch is implemented as
an IBF, and the \texttt{report} function is supported via an IBF subtraction followed by
the \textsf{listItems} operation. See Section~\ref{sec:IBF} for details of the implementation. 
If the
\textsf{listItems} operation fails to decode the IBF, or if it lists more than
$m$ items, we report that the cardinality bound is exceeded.
\else
See appendix \ref{app:proofs}.
\fi
\end{proof}
\begin{theorem}
\ifFull \label{thm:variable-cardinality-reportingFull} \else \label{thm:variable-cardinality-reporting} \fi{} 
Suppose that we wish to report set differences that may be large or small,
without the fixed bound $m$, in a time bound that depends on the size of the
difference but that may depend only weakly on the size of the total data set. In
this case, we can answer range difference queries with probability at least $1 - \epsilon$ 
for any range space that can
be modeled by a canonical group or semigroup range searching data structure. Our
solution stores a number of words of aggregate information per canonical set
that is $O(1)$ per element of the set, uses a combination operation $\oplus$
that takes time $O(m)$ (where $m$ is the cardinality of the final set-theoretic
difference) and allows the result of this combination to be decoded to yield the
query results in $O(m)$ time. If the data structure is updated, the aggregate
information associated with each changed canonical set can itself be updated in
logarithmic time per change.  
\end{theorem}
\begin{proof}
\ifFull
For each canonical set $S$ of the range query data structure, we store an SDR sketch
consisting of a hierarchy of $\log_2
|S|$ IBFs, capable of successfully performing the \textsf{listItems} operation
on sets of sizes that are powers of two up to the size of $S$ itself. The total
size of these IBFs adds in a geometric series to be proportional to the size of
$S$. The time to update the data structure then follows since each element of
each canonical set is stored in a logarithmic number of IBFs.

To answer a range query, suppose first that $m$ is known. In this case, we
choose $j=\lceil\log_2 d\rceil$, so that $2^j>m$ but $2^j=O(m)$, and we collect
and combine the IBFs of size $2^j$ exactly as in
Theorem~\ifFull{\ref{thm:fixed-cardinality-reportingFull}}\else{\ref{thm:fixed-cardinality-reporting}}\fi{}. The only possible complication is
that, for some canonical sets, the number $2^j$ may be larger than the number of
elements in the canonical set, so that we have no pre-stored IBF of the correct
size to collect. In this case, though, we can construct an IBF of the correct
size from the list of elements of the canonical set, in time $O(2^j)$.
Finally, if $m$ is unknown, we try the same procedure for successive powers of two until finding a set of powers for which the result succeeds.
\else
See Appendix \ref{app:proofs}.
\fi
\end{proof}
\begin{theorem}
\ifFull \label{thm:size-estimationFull} \else \label{thm:size-estimation} \fi{} 
Suppose that we wish to report the cardinality of the set difference rather than
its elements, and further that we allow this cardinality to be reported
approximately, within a fixed approximation ratio that may be arbitrarily close
to one.  In this case, we can answer range difference queries with 
probability at least $1 - \epsilon$ for any range
space that can be modeled by a canonical group or semigroup range searching data
structure.  Our solution stores a number of words of aggregate information per
canonical set that has size $O(\log n \log U)$, uses a combination operation $\oplus$
that takes time $O(\log n \log U)$ and allows the result of this combination to be
decoded to yield the query results in $O(\log n \log U)$ time. 
%\textbf{are these log n bounds right?}
\end{theorem}
\ifStrata
\begin{proof}
We associate a strata estimator with each canonical set, and we let $\oplus$ be
defined by the operations of adding and subtracting strata estimators as
described in the previous section. The result of applying $\oplus$ to these
strata estimators will be a single strata estimator for the queried range
difference, from which we may use the estimator to approximate the cardinality
of the difference. The space and time bounds are those for the strata estimator.
\end{proof}
\else
\begin{proof}
\ifFull
We associate with each canonical set an SDC sketch for its elements that supports the 
\texttt{count} function with probability $1 - \epsilon$.
and we let $\oplus$ be
defined by the operations of adding the SDC sketches.
The result of applying $\oplus$ to these
sketches will be a single sketch for each of the queried ranges.
Then to approximate the cardinality of the difference, we call the \texttt{count} function.
We implement each SDC sketch as set of $O(\log(1/\epsilon))$ independent linear 
frequency moment estimation sketches. 
The space and time bounds are the same as those for computing these sketches, and further
details of the implementation are given in Section~\ref{sec:fme}.
\else
See Appendix \ref{app:proofs}.
\fi
\end{proof}

\begin{theorem}
\ifFull \label{thm:bloom-decodeFull} \else \label{thm:bloom-decode} \fi{} 
Suppose $X$ and $Y$ are sets with $m$ elements in their
symmetric difference, i.e., $m=|X\bigtriangleup Y|$, and let
$\epsilon>0$ be an arbitrary real number.
Let $A$ and $B$ be invertible Bloom filters
built from $X$ and $Y$, respectively, such that each IBF has
$\lambda\ge k+\lceil\log k\rceil$ bits in its \textsf{gSum} field, i.e.,
the range of $g$ is $[1,2^\lambda)$,
and each IBF has at least
$2km$ cells, where $k > \lceil \log (m/\epsilon)\rceil+1$ is the
number of hash functions used.
Then the \textsf{listItems} method for the IBF $C$ resulting from the
\textsf{subtract}$(A,B,C)$ method will list all $m$
elements of $X\bigtriangleup Y$ and identify which belong
to $X\setminus Y$ and which belong to $Y\setminus X$ with 
probability at least $1-\epsilon$.
\end{theorem}
\begin{proof}
\ifFull
There are at least $km$ empty cells in $C$; hence,
the probability that an element 
$z\in X\bigtriangleup Y$ collides with another element
in $X\bigtriangleup Y$ in each of its $k$ cells
is at most $2^{-k}$, independent of 
the locations chosen in $C$ by the other 
elements in $X\bigtriangleup Y$.
Thus, the probability that any element in $X\bigtriangleup Y$ has $k$
collisions is at most $m\, 2^{-k} \le \epsilon/2$.
Therefore, with probability at least $1-\epsilon/2$, every 
element in $X\bigtriangleup Y$ has at least one cell in $C$ where it is
the only occupant.

Next, consider the probability that a cell has
$|\textsf{count}|=1$ and $|\textsf{gSum}| = g(|\textsf{idSum}|)$ but
nevertheless has two or more occupants.
Since the \textsf{gSum} field has
at least $k$ bits,
the probability that a cell is ``bad'' in this way
is at most
$2^{-\lambda}$. 
Thus, since there are at most $km$ non-zero cells,
the probability we have any such bad cell in $C$ is at
most
\[
km\, 2^{-\lambda} \le m 2^{-k} \le \epsilon/2.
\]
This completes the proof.
\else
See Appendix \ref{app:proofs}.
\fi
\end{proof}

\ifStrata

\begin{lemma}
\ifFull \label{lem:strata-sizesFull} \else \label{lem:strata-sizes} \fi{} 
For any $\epsilon>0$, with probability $1-\epsilon$, the
cardinality of each of the strata from $0$ to $i$ in $S$
is within a multiplicative factor of
$(1\pm\delta)$ of its expectation, for
$\delta \ge \sqrt{4(2+\log (1/\epsilon))(2^i/m)\ln 2}$.
\end{lemma}
\begin{proof}
\ifFull
Let $Z_j$ be a random variable representing the size of the $j$th stratum,
for $j=0,1,\ldots,i$,
and let 
$\mu_j = E(Z_j) = m/2^j$.
By standard Chernoff bounds, 
\ifFull
(e.g., see~\cite{mu-pcrap-05,mr-ra-95}),
\fi{}
for $\delta > 0$,
%using align instead compresses space,
% but why not also use it in full version
%\ifFull
%\[
%  \Pr\left(Z_j > (1+\delta)\mu_j\right) \,< \,e^{-\mu_j\delta^2/4} 
%\]
%and
%\[
%  \Pr\left(Z_j < (1-\delta)\mu_j\right)  \,<\, e^{-\mu_j\delta^2/4}.
%\]
%\else
\begin{align} 
  \Pr\left(Z_j > (1+\delta)\mu_j\right) & < e^{-\mu_j\delta^2/4} \\
  \mbox{and \quad} 
  Pr\left(Z_j < (1-\delta)\mu_j\right) & < e^{-\mu_j\delta^2/4}.
\end{align}
%\fi{}
By taking $\delta \ge \sqrt{4(2+\log (1/\epsilon))(2^i/m)\ln 2}$, 
we have that the probability that a particular $Z_j$ is
not within a multiplicative factor of
$1\pm \delta$ of its expectation is at most
\[
2e^{-\mu_j\delta^2/4} \,\le\, 2^{-(1+\log (1/\epsilon))2^{i-j}} 
   \, =\, \frac{(\epsilon/{2})}{2^{2^{i-j}}}.
\]
Thus, by a union bound, the probability that any $Z_j$
is not within a multiplicative factor of $1\pm \delta$ of its
expectation is at most
\[
\sum_{j=0}^i \frac{(\epsilon/{2})}{2^{2^{i-j}}}
\,
= 
\,
({\epsilon}/{2})  \sum_{j=0}^i 2^{-2^{i-j}} ,
\]
which is at most
$\epsilon$.
\else
{
    See Lemma \ref{lem:strata-sizesFull} in Appendix \ref{app:proofs}. 
}
\fi{}
\end{proof}
\begin{theorem}
\ifFull \label{thm:strataFull} \else \label{thm:strata} \fi{} 
Let $X$ and $Y$ whose symmetric difference has size $m$,
and let $0<\epsilon<1$ and $0<\delta<1$ be arbitrary real numbers.
Suppose we encode $X$ and $Y$ in a strata estimator, $S$, of height
$\lceil\log U\rceil$, 
with each member IBF having $2km'$ cells, $k$ hash functions, and 
\textsf{gSum} size of $k+\lceil\log k\rceil$ bits, for 
$k\ge \lceil\log 2m'/\epsilon\rceil+1$ and 
$m' \ge 4\delta^{-2}(2+\log (2/\epsilon))\ln 2$,
which is $\Omega(\delta^{-2}\log (1/\epsilon))$. 
Then, with probability at least
$1-\epsilon$, the estimate resulting from our decoding algorithm,
run on the first IBF that is not too full to decode, will
be within a factor of $1\pm\delta$ of $m$.
\end{theorem}

\begin{proof}
\ifFull
{
By Lemma~\ifFull{\ref{lem:strata-sizesFull}}\else{\ref{lem:strata-sizes}}\fi{},
for $\delta = \sqrt{4(2+\log (2/\epsilon))(2^i/m)\ln 2}$, 
the cardinalities of the first $i$ levels of a strata
estimator are each within a factor of $1\pm\delta$ 
of its expectation, $m/2^i$, with probability $1-\epsilon/2$.
We don't know the value of $m$ when we construct the
strata estimators, but, by Theorem~\ifFull{\ref{thm:bloom-decodeFull}}\else{\ref{thm:bloom-decode}}\fi{}, 
if each IBF has $2km'$ cells, 
where $k = \lceil \log 2m'/\epsilon\rceil+1$
and $m' \ge 2$, then we can decode a set of
$m'$ elements with probability at least $1-\epsilon/2$.
Thus, we can assume without loss of generality that $m>m'$, for otherwise with high probability we will 
decode the entire strata estimator and determine the correct distance without error.
With probability $1-\epsilon$,
we can achieve an approximation factor of $1\pm\delta$ for our size
estimator by taking
\ifFull
\[
m' \ge 4\delta^{-2}(2+\log (2/\epsilon))\ln 2. 
\]
\else
$m' \ge 4\delta^{-2}(2+\log (2/\epsilon))\ln 2. $
\fi{}
Depending on the level, $i$, where we can first successfully decode
the contents of its IBF,
this gives us a $(1\pm\delta)$ estimate for $m/2^i$, which we convert
into a $(1\pm\delta)$ estimate for $m$ by multiplying by $2^i$.
}
\else
{
    See Theorem \ref{thm:strata} Appendix \ref{app:proofs}. 
}
\fi{}
\end{proof}
\fi{}

%We did the general framework, now lets show it off for specific examples
%\section{Set-Difference Range Searching}
\section{Set-Difference Range Searching Details}
\label{sec:sdDetails}
\ifFull
%Many data structures for designed to answer geometric searching queries 
%are based on a decomposition scheme.
In this section, we look at several examples of canonical semigroup range searching
structures and show how to transform them into data structures that can
efficiently answer set-difference range queries.  
Specifically, we give more detail for the results which were summarized in 
Table~\ref{tab:summary}. 

%
%NOTE: do to our notation earlier, there isn't really a distinction between
%intersection and range searching
%
%We examine two special cases of geometric searching problems. 
%First we examine data structures for \emph{range searching}
%in which the result of a query 
%Then we examine a data structure for \emph{intersection searching} in which 
%$\diamond(\gamma,S)$ is the subset of $S$ which intersects $\gamma$.
%Regardless of the specific definition of $\diamond$, the data structures
%are based on decomposition schemes. 
%Therefore we can apply Theorem \ref{thmTransform} to  
%transform them into a data structures for 
%set-difference range searching or set-difference intersection searching.  
%
\subsection{Orthogonal Range Searching}
In the orthogonal range searching problem, our task is to preprocess a set of points in $R^d$
 such that given a query range defined by an axis-parallel hyper rectangle, we can efficiently report or count the set of points contained in the hyper-rectangle
The range tree is a classic canonical semigroup range searching data structure that
solves this problem.

In one dimension, a range tree is a binary tree in which each leaf represents a
small range containing a single point, and an in-order traversal of the leaves
corresponds to the sorted order of those points. The range of any interior node
is the union of the
ranges of its descendant leaves. 
Each node stores the value of the aggregate function on its 
canonical set---the
set of points in its range.

%Thus the range of the root node contains all the points stored in the range tree. 
%In a range tree for the orthogonal range
%counting problem, each node stores the sum of the weights of the points in its
%range. 

We can easily build a $d \geq 2$ dimensional range tree by building a range tree
over the $d^{\textnormal{th}}$ dimension, and at each internal node we store a
($d-1$)-dimensional dimensional range tree over the points in that node's range.
To answer a query using a range tree, we find a set of $O(\log^d n)$ canonical
ranges which exactly cover the query range and report the sum of the 
aggregate functions of each of these canonical ranges. 

Thus  
we can apply Theorem \ref{thm:fixed-cardinality-reporting},
\ref{thm:variable-cardinality-reporting}, or \ref{thm:size-estimation} 
 to obtain a data structure for set-difference orthogonal range queries.
We preprocess each canonical set into an appropriate sketch, and instead of storing the
value of the aggregate function in each node, instead we store a pointer to the
corresponding sketch for that node's canonical set. When answering a query, instead of
summing the aggregate functions, we simply add or subtract the sketches.

%Note that although it is common to apply fractional cascading techniques to gain
%a $O(\log n)$ speedup, we cannot apply the above transformation in conjunction
%with fractional cascading techniques. However, we do gain substantial savings in
%both space and query time when the size of the set difference is expected to be
%small. 
%
\subsection{Simplex Range Searching}
Partition trees~\cite{Matousek1992} 
are a linear space data structure for the simplex range
searching problem. In this problem we preprocess a set of points $P$ in $\mathbb{R}^d$
so that given a $d$-dimensional query simplex $\rho$ we can return the set of points 
$\{p \in P | p \in \rho\}$.   
Partition trees are  based on recursively partitioning the plane into a set of regions such that each
region contains at least a constant fraction of the points in $P$. 
These regions form canonical subsets and 
thus partition trees are 
%a data structure for a geometric searching problem
%where $\diamond =\, \in$ and $\gamma$ is a $d$-dimensional simplex. 
%Furthermore, they are 
%based on a decomposition scheme.  %TODO \emph{
a canonical semigroup range searching data structure %} 
Therefore, we can apply the transformation in
Theorem \ref{thm:fixed-cardinality-reporting},
\ref{thm:variable-cardinality-reporting}, \ref{thm:size-estimation} 
to obtain a data structure for set-difference simplex
range queries.  Note that although there has been significant research on how to
choose the partitions, Theorems \ref{thm:fixed-cardinality-reporting},
\ref{thm:variable-cardinality-reporting} and \ref{thm:size-estimation} can be applied independent of these
details.  We only require that the data structure  precomputes answers for each
member of a family of canonical subsets and answers queries by a combination of
these precomputed answers.
We simply replace the precomputed answe in the standard version of the data structure
for each canonical subset with the appropriate SDR or SDC sketching data structure.
Then, when answering a query, instead of summing the standard precomputed answers, we
simply add the sketches of all the canonical subsets in each query range, and then
subtract the two resulting sketches. 

%
%{\bf:} 
%Can we also apply this to the other logarithmic query time simplex range
%counting data structure based on cuttings (I think we can) and obtain a
%similar query time / space tradeoff?
%Cutting trees are also based on a partition scheme (canonical subsets) and they
%are a data structure for counting, so I'm pretty sure they work here, but I
%don't know how the combination of cutting trees and partition trees works. %
%for sake of time, forget time / space tradeoff. Maybe put in cutting trees?
%
%
%If we had more time, we could do a whole separate section on intersection
%queries. In agarwal's chapter he makes a distinction between intersection
%queries and range searching queries. Here I'm just grouping stabbing queries in
%with the range searching queries. Maybe for the journal version?
%
\subsection{Stabbing Queries}
Given a set $S$ of line segments in the plane and a vertical query line, we want to
report or count the line segments which intersect the query line. This is known
as a vertical line stabbing query.
One common solution to this problem is the segment tree.
A segment tree is an augmented binary tree, similar to the range tree.
%In the range tree, 
%Each leaf represents a small range containing only a single
%point in $S$. For a segment tree this point is one of the endpoints of a line segment $s \in
%S$. 
Each leaf represents a small range containing a single endpoint of one line
segment, 
%Also like a range tree, 
and the range $r_v$ of any interior node $v$
is %the union of the ranges of its children, or equivalently 
the union of the
ranges of its descendant leaves. Each segment $s = (x_s, y_s), (x_e, y_e)$
corresponds to the range $r(s) = [x_s, x_e]$. Let $p(v)$ denote the parent of $v$.  
We store a segment $s$ in each node $v$ such
that $r(s)$ spans $r_v$ but does not span $r_{p(v)}$.  
%Note that this implies
%that each segment is stored at most $O(\log n)$ nodes in the tree, and also that
%each segment is stored at most once on any root to leaf path in the tree. 
To answer a query we search in the segment tree for the $x$ coordinate of the 
vertical query line $r$. This root to leaf path in the tree defines a 
$O(\log n)$ size sub-family of canonical subsets of $S$, which is
exactly those canonical subsets which cover the query point $x$. 
%
%Interval trees are an alternate
%data structure which solves this problem. However while segment trees can be
%naturally generalized to higher dimensions, interval trees cannot.
%Furthermore interval trees are not based on a decomposition scheme, 
%but segment trees are based on a decomposition scheme and are therefore
%a more appropriate choice in the context of this paper. 

Hence, the segment tree is % based a decomposition scheme for the segments in $S$. %TODO \emph{
a canonical semigroup range searching data structure %} 
Thus, given a segment tree for
vertical line stabbing (counting) queries we can apply Theorems
\ref{thm:fixed-cardinality-reporting}, \ref{thm:variable-cardinality-reporting}
or \ref{thm:size-estimation}
to transform it into a data structure for set-difference
stabbing queries. The details are similar to the previous range searching data structures. 

\subsection{Partial Sum Queries}
% requires group, not semigroup
Let $A$ be a $d$-dimensional array for constant $d$.
A partial sum query has as input a pair of $d$-dimensional indices
$([i_1, \ldots , i_d], [j_1, \ldots j_d])$ which define a
$d$-dimensional (rectangular) range,  and to answer the query we
must return the sum of elements whose indices fall in that range.
This is similar to the orthogonal range query problem except that the points in
the set are restricted to a grid. 
%Here $\gamma$ corresponds to the set of
%indices which fall in between the query indices, $\diamond$ returns the set of
%elements with indices $\in \gamma$.  

Although the following techniques extend well to any (constant) dimension, for
clarity of exposition we describe the one and two-dimensional case.
In the one dimensional case, we must preprocess an array of length $n$ such that
for any pair of indices $1 \leq i \leq j \leq n$ we can return the sum of
elements with indices between $i$ and $j$  in constant time.
In linear time we build an array $B$ such that for any index $k$, $B[k] =
\sum_{i=1}^k A[i]$. 
Then to answer a query $Q((i,j), A)$ we return $B[j] -B[i]$.

In the two dimensional case, we preprocess by initializing $B$ with the values
$B[k, \ell] = \sum_{i=1}^k \sum_{j=1}^\ell A[k,\ell]$. This also runs in linear
time (with respect to the size of the input array $A$). 
To answer a query $Q(([i,j], [k,\ell]), A)$ we return 
$B[k, \ell] - B[i, \ell] - B[k, j] + B[i,j]$. 
See Figure~\ref{figPrefixSums} for an illustration.

In the above data structure, we can think of the array $A$ as a set of grid
cells, and each cell of $B$ represents a canonical subset of grid cell of
$A$. 
The data structure is a canonical group range searching structure which 
requires us to take  %TODO \emph{canonical semigroup range searching data structure} 
set differences of canonical subsets. This is not a problem when the elements
corresponding to the grid cells are drawn from a group, since we can just
subtract the sum of the subsets which we want to subtract.  
We can apply Theorems \ref{thm:fixed-cardinality-reporting},
\ref{thm:variable-cardinality-reporting} or \ref{thm:size-estimation}
%by replacing the sum at each
%$B[i,j]$ with an IBF, hierarchy of IBFs, or strata estimator over the elements in the sum. 
%Thus we 
to obtain an efficient structure for set-difference queries on a contiguous
sub-array.

\end{appendix}
\fi{}
\fi{}
\fi{}

\end{document}